\tikzset{
	event/.style={
		draw,
		inner sep=.5pt,
		circle,
		minimum width=6pt
	},
	events/.style={
		font=\tiny,
		xscale=.5,
		yscale=.3
	},
	unit/.style={
		event,
		path picture={ 
			\draw[black]
			(path picture bounding box.south east) -- (path picture bounding box.north west)
			(path picture bounding box.south west) -- (path picture bounding box.north east);
		}
	}
}
\let\op\operatorname
\newcommand{\tessla}{TeSSLa\xspace}
\let\phi\varphi
\let\epsilon\varepsilon
\DeclareMathOperator{\ig}{\textit{ig}}
\DeclareMathOperator{\gap}{\smallsmile}
\newcommand{\unitsym}{\boxempty}
\newcommand{\testf}{\textrm{t\kern-0.14em f}}
\newcommand{\testrue}{\textrm{t\kern-0.16em t}}
\newcommand{\tesfalse}{\textrm{f\kern-0.1em f}}
\newcommand{\KW}[1]{\ensuremath{\mathbf{#1}}}
\newcommand{\NIL}{\KW{nil}\xspace}
\newcommand{\UNIT}{\KW{unit}\xspace}
\newcommand{\TIME}{\KW{time}\xspace}
\newcommand{\LIFT}{\KW{lift}\xspace}
\newcommand{\LAST}{\KW{last}\xspace}
\newcommand{\DELAY}{\KW{delay}\xspace}
\newcommand{\KWabs}[1]{\ensuremath{\KW{#1}^\#}}
\newcommand{\NILabs}{\KWabs{nil}\xspace}
\newcommand{\UNITabs}{\KWabs{unit}\xspace}
\newcommand{\TIMEabs}{\KWabs{time}\xspace}
\newcommand{\LIFTabs}{\KWabs{lift}\xspace}
\newcommand{\LASTabs}{\KWabs{last}\xspace}
\newcommand{\LASTbot}{\ensuremath{\KW{last}^\#_\bot}\xspace}
\newcommand{\LASTgap}{\ensuremath{\KW{last}^\#_{\gap}}\xspace}
\newcommand{\DELAYabs}{\KWabs{delay}\xspace}
\newcommand{\DELAYbot}{\ensuremath{\KW{delay}^\#_\bot}\xspace}
\newcommand{\DELAYgap}{\ensuremath{\KW{delay}^\#_{\gap}}\xspace}
\newcommand{\LASTTIME}{\KW{lastTime}}
\newcommand{\LASTTIMEabs}{\ensuremath{\KW{lastTime}^{\#}}}
\newcommand{\DEF}[1]{\textsf{#1}}
\newcommand{\AUX}[1]{\ensuremath{\mathit{#1}}}
\newcommand{\Ticks}{\AUX{ticks}}
\DeclareMathOperator{\DefinedAs}{\stackrel{\text{def}}{=}}
\newcommand{\KWbullet}{\noindent\hskip-7mm\makebox[7mm]{\hfill\tikz[baseline]\draw[-{Triangle[length=4pt]}](0,.6ex)--++(4pt,0);\kern2pt}}
\let\KWabsbullet\KWbullet
\newcommand{\KWlightbullet}{\noindent\hskip-7mm\makebox[7mm]{\hfill\tikz[baseline]\draw[-{Triangle[open,length=4pt]}](0,.6ex)--++(4pt,0);\kern2pt}}
\newcommand{\mainrowcolors}{\rowcolors{1}{black!10}{black!3}}
\newenvironment{zebratabular}{\mainrowcolors\begin{tabular}}{\end{tabular}}
\newcommand{\setrownumber}[1]{\global\rownum#1\relax}
\newcommand{\headerrow}{\rowcolor{black!20}\setrownumber1}
\begin{document}


\newcommand{\Thanks}{\thanks{This work was funded in part by the
    Madrid Regional Government under project \emph{S2018/TCS-4339
    (BLOQUES-CM)}, by EU H2020 projects 731535 \emph{Elastest} and 732016 \emph{COEMS}, by
    Spanish National Project \emph{BOSCO (PGC2018-102210-B-100)} and by
    the BMBF project \emph{ARAMiS II} with funding~ID 01\,IS\,16025.}}

\title{Runtime Verification For Timed Event Streams With Partial Information\Thanks}
%
%
\author{Martin~Leucker\inst{1} \and
	César~Sánchez\inst{2} \and
	Torben~Scheffel\inst{1} \and
	Malte~Schmitz\inst{1} \and
	Daniel~Thoma\inst{1}}
\authorrunning{M. Leucker, C. Sánchez, T. Scheffel, M. Schmitz, and D. Thoma}
\institute{University of Lübeck, Germany
	\email{\{leucker,scheffel,schmitz,thoma\}@isp.uni-luebeck.de}\\
   \and
	IMDEA Software Institute, Spain\\
	\email{cesar.sanchez@imdea.org}}
\maketitle

\begin{abstract}
  Runtime Verification (RV) studies how to analyze execution traces of a system under observation.
  Stream Runtime Verification (SRV) applies stream transformations to obtain information from observed traces.
  Incomplete traces with information missing in gaps pose a common challenge when applying RV and SRV techniques to real-world systems as RV approaches typically require the complete trace without missing parts.
  This paper presents a solution to perform SRV on incomplete traces based on abstraction.
  We use TeSSLa as specification language for non-synchronized timed event streams and define abstract event streams representing the set of all possible traces that could have occurred during gaps in the input trace.
  We show how to translate a TeSSLa specification to its abstract counterpart that can propagate gaps through the transformation of the input streams and thus generate sound outputs even if the input streams contain gaps and events with imprecise values.
  The solution has been implemented as a set of macros for the original TeSSLa and an empirical evaluation shows the feasibility of the approach.
\end{abstract}


\section{Introduction}
\label{sec:intro}

Runtime verification (RV) is a dynamic formal method for software
system reliability.
RV studies how to analyze and evaluate traces against formal specifications
and how to obtain program traces from the system under observation,
e.g., through software instrumentation or utilization of processors'
embedded trace units.
Since RV only inspects one execution trace of the system, it is often
regarded to be a readily applicable but incomplete approach, that
combines formal verification with testing and debugging.

Most early RV languages were based on logics common in static
verification, like LTL~\cite{manna95temporal}, past LTL adapted for
finite
paths~\cite{havelund02synthesizing,eisner03reasoning,bauer11runtime},
regular expressions~\cite{sen03generating} or timed regular
expressions~\cite{TRE}.
For these logics, the monitoring problem consists on computing a
Boolean verdict indicating whether the trace fulfills the
specification.
In contrast to static analysis, however, considering only a single
concrete trace enables the application of more complex analyses:
Stream Runtime Verification (SRV)~\cite{dangelo05lola,bozzelli14foundations,tessla}
uses stream transformations
to derive additional streams as verdicts from the input streams.
Using SRV one can still check if the input stream is conformant with a
specification, but additionally verify streams in terms of their
events' data:
streams in SRV can store data from richer domains than Booleans,
including numerical values or user defined data-types, so SRV
languages can extract quantitative values and express quantitative
properties like ``\emph{compute the average retransmission time}'' or
``\emph{compute the longest duration of a function}''.
SRV cleanly separates the temporal dependencies that the stream
transformation algorithms follow from the concrete operations to be
performed on the data, which are specific to each data-type.
As an example for SRV consider the trace diagram on the left of
Fig.~\ref{fig:example}.
We consider non-synchronized event streams, i.e., sequences of events
with increasing timestamps and values from a data domain.
Using non-synchronized event streams one can represent events arriving on different
streams with different frequencies in a compact way with little computation
overhead because there is no need to process additional synchronization events
in the stream-transformation process.
In this paper we use the \tessla specification language
\cite{tessla}, an SRV language for non-synchronized, timed event
streams.
TeSSLa has been defined to be general enough to allow for a natural
translation from other common SRV formalisms, e.g., Lola~\cite{dangelo05lola} and
Striver~\cite{gorostiaga18striver}.
Therefore, our results carry over to these languages as well.

\begin{figure}[t]
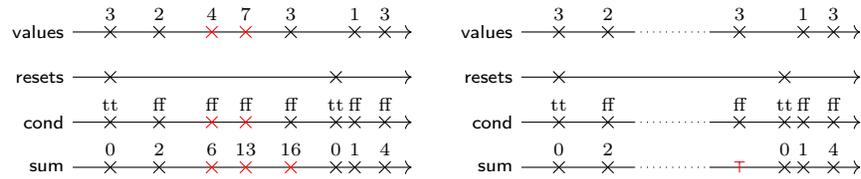

\centering
\begin{streampicture}[yscale=1.2]
  \begin{stream}{\DEF{values}}
    \definedto{9}
    \eventat[val=3]{1}
    \eventat[val=2]{2.3}
    \eventat[val=4,red]{3.7}
    \eventat[val=7,red]{4.6}
    \eventat[val=3]{5.8}
    \eventat[val=1]{7.5}
    \eventat[val=3]{8.3}
  \end{stream}
  \begin{stream}{\DEF{resets}}
    \definedto{9}
    \eventat{1}
    \eventat{7}
  \end{stream}
  \begin{stream}{\DEF{cond}}
    \definedto{9}
    \eventat[val=tt]{1}
    \eventat[val=ff]{2.3}
    \eventat[val=ff,red]{3.7}
    \eventat[val=ff,red]{4.6}
    \eventat[val=ff]{5.8}
    \eventat[val=tt]{7}
    \eventat[val=ff]{7.5}
    \eventat[val=ff]{8.3}
  \end{stream}
  \begin{stream}{\DEF{sum}}
    \definedto{9}
    \eventat[val=0]{1}
    \eventat[val=2]{2.3}
    \eventat[val=6,red]{3.7}
    \eventat[val=13,red]{4.6}
    \eventat[val=16,red]{5.8}
    \eventat[val=0]{7}
    \eventat[val=1]{7.5}
    \eventat[val=4]{8.3}
  \end{stream}
\end{streampicture}
\quad
\begin{streampicture}[yscale=1.2]
  \begin{stream}{\DEF{values}}
    \definedto{3}
    \eventat[val=3]{1}
    \eventat[val=2]{2.3}
    \undefinedto{5}
    \definedto{9}
    \eventat[val=3]{5.8}
    \eventat[val=1]{7.5}
    \eventat[val=3]{8.3}
  \end{stream}
  \begin{stream}{\DEF{resets}}
    \definedto{9}
    \eventat{1}
    \eventat{7}
  \end{stream}
  \begin{stream}{\DEF{cond}}
    \definedto{3}
    \eventat[val=tt]{1}
    \eventat[val=ff]{2.3}
    \undefinedto{5}
    \definedto{9}
    \eventat[val=ff]{5.8}
    \eventat[val=tt]{7}
    \eventat[val=ff]{7.5}
    \eventat[val=ff]{8.3}
  \end{stream}
  \begin{stream}{\DEF{sum}}
    \definedto{3}
    \eventat[val=0]{1}
    \eventat[val=2]{2.3}
    \undefinedto{5}
    \definedto{9}
    \topat[red]{5.8}
    \eventat[val=0]{7}
    \eventat[val=1]{7.5}
    \eventat[val=4]{8.3}
  \end{stream}
\end{streampicture}
\caption{Example trace for a typical SRV specification (left) with two input streams
\DEF{values} (with numeric values) and \DEF{resets} (with no internal value).
The  intention of the specification  is to accumulate in the output stream \DEF{sum} all
values since the last reset. 
The intermediate stream \DEF{cond} is derived from the input streams indicating
if \DEF{reset} has currently the most recent event, and thus the sum should be reset to 0.
If the input streams contain gaps (dotted regions on the right)
some information
can no longer be computed, but after a reset event the computation recovers from the
data loss during the gap. $\top$ denotes events with unknown data.}
\label{fig:example}
\end{figure}

Since RV is performed on traces obtained from the system under test in the
deployed environment, it is a common practical problem for RV techniques
that the traces do not cover the entire run of the system.
However, most of the previous RV approaches require the trace to be available
without any interruptions in order to obtain a verdict, because
this knowledge is assumed in the semantics of the specification logics.
Especially in the case of interrupted traces with some data losses
applying previous RV techniques can be very challenging.
Unfortunately those traces occur very often in practical
testing and debugging scenarios, e.g., due to interrupted experiments,
buffer overflows, network errors or any other temporary problem
with the trace retrieval.

In this paper we present a solution to the problem of evaluating
traces with imprecise values and even interrupted traces.
Our only assumption is that we have exact knowledge of the imprecision
of the trace in the following sense: (1) for events with imprecise
values we know the range of values and (2) for data losses we know
when we stop getting information and when the trace becomes reliable
again. We call such a sequence of uncertainty a \emph{gap} in the
trace.
Our solution automatically propagates gaps and imprecisions, and
allows to obtain sound verdicts even in the case of missing information
in the input trace.
The assumption is reasonable in our target application: online
non-intrusive monitoring low-level embedded software.

Fig.~\ref{fig:example} on the right displays a case where the input
stream \DEF{values} has a long gap in the middle.
It is not possible to determine the events in the output stream
\DEF{sum} during that gap, because we do not even know if and how many
events might have happened during that gap.
Thus, the intermediate stream \DEF{cond} and the output stream
\DEF{sum} simply copy that gap representing any possible combination
of events that might occur.
The first event after the gap is the one with the value 3 on
\DEF{values}.
Because no reset happened after the end of the gap, we would add 3 to
the latest event's value on \DEF{sum}, but the gap is the latest on
\DEF{sum}.
Thus, we only know that this input event on \DEF{values} causes an
event on \DEF{sum} independently of what might have happened during
the gap, but the value of that event completely depends on possible
events occurring during the gap.
After the next event on \DEF{reset} the values of the following events on \DEF{sum} are independent of any previous events.
The monitor can fully recover from the missing information during the gap and can again produce events with precise values.

In order to realize this propagation of gaps through all the steps of
the stream-transformation we need to represent all potentially
infinitely many concrete traces (time is dense and values are for
arbitrary domains) that might have happened during gaps and imprecise
events.
An intuitive approach would be a symbolic representation in terms of constraint formulas to describe the set of all possible streams.
These formulas would then be updated while evaluating the input trace.
While such a symbolic execution might work for shorter traces, the representation can grow quickly with each input event.
Consequently the computational cost could grow prohibitively with the trace length for many input traces.
Instead, in this paper we introduce a framework based on
abstraction~\cite{cousot77abstract,cousot92abstract}.
We use abstraction in two ways:
\begin{compactenum}[(1)]
\item Streams are lifted from concrete domains of data to abstract
domains to model possible sets of values.
For example, in our solution a stream can store intervals as abstract
numerical values.
\item We define the notion of abstract traces, which extend
timed streams with the capabilities of representing gaps.
Intuitively, an abstract trace over-approximates the sets of concrete
traces that can be obtained by filling the gaps with all possible
concrete events.
\end{compactenum}

\noindent
Our approach allows for both gaps in the input streams as well as
events carrying imprecise values. Such imprecise values can be
modelled by abstract domains, e.g., intervals of real numbers.
Since we rely on abstraction, we can avoid false negatives and false
positives in the usual sense:
concrete verdicts are guaranteed to hold and imprecise verdicts are
clearly distinguished from concrete verdicts.
The achievable precision depends on the specification and the input trace.

After reproducing the semantics of the basic TeSSLa operators in Section~\ref{sec:tessla}, we introduce abstract semantics of the existing basic operators of \tessla in Section~\ref{sec:abstraction}.
Using these abstract TeSSLa operators, we can take a TeSSLa specification on streams and replace every TeSSLa operator with its abstract counterpart and derive an abstraction of the specification on abstract event streams.
We show that the abstract specification is a sound abstraction of the
concrete specification, i.e., every concrete verdict generated by the
original specification on a set $S$ of possible input traces is
represented by the abstract verdict applied to an abstraction of $S$.
We further show that the abstract TeSSLa operators are a perfect
abstraction of their concrete counterparts, i.e., that applying the
concrete operator on all individual elements of $S$ doesn't get you
more accurate results.
Finally, we show that an abstract TeSSLa specification can be implemented using the existing TeSSLa basic operators by representing an abstract event stream as multiple concrete event streams carrying information about the events and the gaps.
Since the perfect accuracy of the individual abstract TeSSLa operators does not guarantee perfect accuracy of their compositions, we discuss the accuracy of composed abstract TeSSLa specifications in Section~\ref{sec:compositional}.
Next we present in Section~\ref{sec:queue} an advanced use-case where
we apply abstract TeSSLa to streams over a complex data domain of
unbounded queues, which are used to compute the average of all events
that happened in the sliding window of the last five time units.
In the final Section~\ref{sec:evaluation} we evaluate the overhead and the accuracy of the abstractions presented in this paper on representative example specifications and corresponding input traces with gaps.

\paragraph{Related Work.}
SRV was pioneered by LOLA~\cite{dangelo05lola,LOLA2,rtlola}.
TeSSLa~\cite{tessla} generalises to asynchronous streams the original
idea of LOLA of recursive equations over stream transformations.  Its
design is influenced by formalisms like stream programming languages
\cite{halbwachs87lustre,berry00foundations,gautier87signal} and
functional reactive programming~\cite{eliot97functional}.
Other approaches to handle data and time constraints include
Quantitative Regular Expressions QRE~\cite{QRE} and Signal Temporal
Logic~\cite{STL}.

While ubiquitous in practice, the problem of gaps in an observation
trace has not been studied extensively.
To the best of our knowledge, abstraction techniques have not been
applied to the evaluation of stream-based specifications.
However, approaches to handle the absence of events or ordering
information have been presented for
MTL~\cite{DBLP:conf/fsttcs/BasinKZ15} and past-time
LTL~\cite{DBLP:conf/rv/WangASL11}.
State estimation based on Markov models has been applied to replace
absent information by a probabilistic
estimation~\cite{DBLP:conf/rv/StollerBSGHSZ11}.
The concept of abstract interpretation used throughout this paper has
been introduced in \cite{DBLP:conf/popl/CousotC77}.



\section{The TeSSLa Specification Language}
\label{sec:tessla}

A \emph{time domain} is a totally ordered semi-ring $(𝕋,0,1,+,·,≤)$
that is positive, i.e., $∀_{t∈𝕋}\, 0 ≤ t$. 
We extend the order on time domains to the set $𝕋_∞ = 𝕋 ∪ ｛∞｝$ with
$∀_{t∈𝕋}\, t < ∞$.
Given a time domain $𝕋$, an \emph{event stream} over a data
domain $𝔻$ is a finite or infinite sequence $s = t_0 d_0 t_1 \dots ∈ 𝓢_𝔻 =
(𝕋·𝔻)^ω ∪ (𝕋·𝔻)^*·(𝕋_∞ ∪ 𝕋·𝔻_⊥)$ where
$\mathbb{D}_\bot := \mathbb{D} \cup \{\bot\}$ and $t_i <
t_{i+1}$ for all $i$ with $0<i+1<|s|$ ($|s|$ is $∞$ for infinite
sequences).
An infinite event stream is an infinite sequence of timestamps and data values
representing the stream's events.
A finite event stream is a finite sequence of timestamped events up to
a certain timestamp that indicates the progress of the stream.
A stream can end with:
\begin{compactitem}[--]
\item a timestamp without a data value that denotes progress up
  to but not including that timestamp,
\item a timestamp followed by $\bot$ (or a data value) which denotes
  progress up to and including that timestamp (and an event at that timestamp),
\item $∞$, which indicates that no additional events will ever arrive on this stream.
\end{compactitem}
We refer to these cases as \emph{exclusive}, \emph{inclusive} and
\emph{infinite progress}, resp.

\noindent
Streams $s ∈ 𝓢_𝔻$ can be seen as functions $s: 𝕋→𝔻∪｛⊥,\mathrm{?}｝$
such that $s(t)$ is a value $d$ if $s$ has an event with value $d$
at time $t$ or $⊥$ if there is no event at time $t$. For timestamps
after the progress of the stream $s(t)$ is ?.
Formally, $s(t) = d$ if $s$ contains $t d$, $s(t) = ⊥$ if $s$ does not
contain $t$, but contains a $t' > t$ or $s$ ends in $t⊥$, and $s(t) =
\mathrm{?}$ otherwise.
We use $\Ticks(s)$ for the set $\{t \in \mathbb T \mid s(t) \in \mathbb D\}$ of timestamps where $s$ has events.
A stream $s$ is a \emph{prefix} of stream $r$ if $\forall_{t∈𝕋}
s(t) ∈｛r(t),\mathrm{?}｝$.
We use the unit type $𝕌=｛\unitsym｝$ for streams carrying only
the single value $\unitsym$.

A TeSSLa specification consists of a collection of stream variables
and possibly recursive equations over these variables using the
operators $\KW{nil}$, $\KW{unit}$, $\KW{time}$, $\KW{lift}$,
$\KW{last}$ and $\KW{delay}$. 
The semantics of recursive equations is given as the least fixed-point of
the equations seen as a function of the stream variables and
fixed input streams.
See \cite{tessla} for more details
and Appendix~\ref{app:semantics-example} for an elaborated example.

\KWbullet
$\KW{nil} = ∞ \in 𝓢_{\emptyset}$ is the stream without any events and infinite progress.

\KWbullet
$\KW{unit} = 0 \unitsym ∞ \in 𝓢_𝕌$ is the stream with a
single unit event at timestamp zero and infinite progress.

\KWbullet
$\KW{time}: 𝓢_𝔻 → 𝓢_𝕋, \KW{time}(s) := z$ maps the event's values to their timestamps:\\
$z(t) = t$ if $t\in\Ticks(s)$ and $z(t) = s(t)$ otherwise.

\KWbullet
$\KW{lift}: ({𝔻_1}_\bot × \dots × {\mathbb{D}_n}_\bot \to 𝔻_\bot) → (𝓢_{𝔻_1}×
\dots × 𝓢_{𝔻_n}→𝓢_𝔻), \KW{lift}(f)(s_1, \ldots, s_n) := z$
lifts a function $f$ on values to a function on streams by applying $f$
to the stream's values for every timestamp. 
The function $f$ must not generate new events, i.e., must fulfill
$f(\bot,\dots,\bot) = \bot$.
\[
  z(t)=\begin{cases}
    f(s_1(t), \dots, s_n(t)) & \text{if $s_1(t)\neq\mathrm{?},\dots, s_n(t)\neq\mathrm{?}$} \\
    ? & \text{otherwise}
  \end{cases}
\]

\KWbullet
$\KW{last}: 𝓢_𝔻 × 𝓢_{𝔻'} → 𝓢_𝔻, \KW{last}(v, r) := z$ takes a stream $v$ of values and a stream $r$ of triggers.
It outputs an event with the previous value on $v$ for every event on $r$.
\[
  z(t) =
  \begin{cases}
    d & t\in\Ticks(r) \text{ and }\exists_{t' < t}\AUX{isLast}(t, t', v, d)\\
    \bot & r(t) = \bot \text{ and } \AUX{defined}(z, t) \text{, or  } \forall_{t'<t} v(t') = \bot \\
      ? & \text{ otherwise}
  \end{cases}
\]
$\AUX{isLast}(t, t', v, d) \DefinedAs v(t')=d \wedge ∀_{t'' | t' < t'' < t} v(t'') = ⊥$ holds
if $t'd$ is the last event on $v$ until $t$, and
$\AUX{defined}(z, t) \DefinedAs \forall_{t' < t} z(t') ≠ \mathrm{?}$ holds if $z$ is defined
until $t$ (exclusive).

Using the basic operators we can now derive the following
utility functions:

\KWlightbullet
$\DEF{const}(c)(a) := \KW{lift}(f_c)(a)$ with $f_c(d) := c$. 
This function maps the values of all events of the input stream $a$ to
a constant value $c$.
Using $\DEF{const}$ we can lift constants into streams representing a
constant signal with this value, e.g.,
$\DEF{true} := \DEF{const}(\DEF{true})(\KW{unit})$ or
$\DEF{zero} := \DEF{const}(0)(\KW{unit})$.

\KWlightbullet
$\DEF{merge}(x, y) := \KW{lift}(f)(x, y)$ with $f(a≠⊥, b) = a$
  and $f(⊥, b) = b$, which combines events from two streams,
  prioritizing the first stream.

Event streams in \tessla can also be interpreted as a \emph{continuous
  signals}.
Using \KW{last} one can query the last known value of an event stream
$s$ and interpret the events on $s$ as points where a piece-wise
constant signal changes its value.  
By combining the $\KW{last}$ and $\KW{lift}$ operators, we can realize:

\KWlightbullet
\emph{signal lift} for total functions $f: \mathbb D \times
\mathbb D' \to \mathbb D''$ as $\DEF{slift}(f)(x, y) :=
\KW{lift}(g_f)(x',y')$ with $x' := \DEF{merge}(x, \KW{last}(x, y))$
and $y' := \DEF{merge}(y, \KW{last}(y, x))$, as well as $g_f(a≠⊥,
b≠⊥) := f(a, b)$, $g_f(⊥, b) := ⊥$, and $g_f(a, ⊥) := ⊥$.

\begin{example}
  \label{running-example}
  We can now specify the stream
  transformations shown on the left in \autoref{fig:example}
  in \tessla.  
  Let $\DEF{resets} \in \mathcal S_{\mathbb U}$ and $\DEF{values} \in
  \mathcal S_{\mathbb Z}$ be two external input event streams.  
  We then derive $\DEF{cond} \in \mathcal S_{\mathbb B}$ and
  $\DEF{lst}, \DEF{sum} \in \mathcal S_{\mathbb Z}$ as follows:
  
  \noindent
  \begin{minipage}{7cm}
    \begin{align*}
      \DEF{cond} &= \DEF{slift}(\le)(\KW{time}(\DEF{resets}), \KW{time}(\DEF{values})) \\
      \DEF{lst} &= \DEF{merge}(\KW{last}(\DEF{sum}, \DEF{values}), \DEF{zero}) \\
      \DEF{sum} &= \DEF{slift}(f)(\DEF{cond}, \DEF{lst}, \DEF{values})
    \end{align*}
  \end{minipage}
  \begin{minipage}{5cm}
    \begin{align*}
      & f: \mathbb B \times \mathbb Z \times \mathbb Z \to \mathbb Z \text{ with} \\
      & f(c,l,v) = \begin{cases}
        0 & \text{if } c = \mathrm{true} \\
        l + v & \text{otherwise}
      \end{cases}
    \end{align*}
  \end{minipage}
\end{example}

Using the operators described above one can only derive streams with timestamps
that are already present in the input streams.
To derive streams with events at computed timestamps one can use the
\KW{delay} operator, which is described in Appendix~\ref{app:delay}.



\section{Abstract \tessla}
\label{sec:abstraction}

\paragraph{Preliminaries.} Given two partial orders $(A,\preceq)$ and $(B, \preceq)$, a \emph{Galois
  Connection} is a pair of monotone functions $α: A → B$ and $γ: B →
A$ such that, for all $a ∈ A$ and $b ∈ B$, $α(a) \preceq b$ if and only if
$a \preceq γ(b)$.
Let $(A,\preceq)$ be a partial order, $f: A → A$ a monotone function and $γ:
B → A$ a function. 
The function $f^\#: B → B$ is an \emph{abstraction} of $f$
whenever, for all $b ∈ B$, $f(γ(b)) \preceq γ(f^\#(b))$.
If $(α, γ)$ is a Galois Connection between $A$ and $B$, the function
$f^\#: B → B$ such that $f^\#(b) := α(f(γ(b))$ is a
\emph{perfect abstraction} of $f$.

In this section we define the abstract counterparts of the \tessla
operators, listed in Section~\ref{sec:tessla}.
A \emph{data abstraction} of a data domain $𝔻$ is an abstract domain $𝔻^\#$
with an element $⊤ ∈ 𝔻^\#$ and an associated concretisation function
$γ: 𝔻^\# → 2^𝔻$ with $γ(⊤) = 𝔻$.
The abstract value $⊤$ represents any possible value from the data
domain and can be used to model an event with known timestamp but
unknown value. 
A \emph{gap} is a segment of an abstract event stream that represents
all combinations of events that could possibly occur in that segment
(both in terms of timestamps and values).
Hence an abstract event stream consists of an event stream over a data
abstraction and an associated set of known timestamps:

\begin{definition}[Abstract Event Stream]
  \label{def:pes}
  Given a time domain $𝕋$, an \emph{abstract event stream} over a data
  domain $𝔻$ is a pair $(s, Δ)$ with $s ∈ 𝓢_{𝔻^{\#}}$ and
  $Δ ⊆ 𝕋$ such that $Δ$ can be represented as union of intervals
  whose (inclusive or exclusive) boundaries are indicated by events in an event stream.
  Further, we require $s(t) \neq \bot \Rightarrow t \in Δ$.
  The set of all abstract event streams over $𝔻$ is denoted as $𝓟_𝔻$.
  The concretisation function $γ: 𝓟_𝔻 → 2^{𝓢_𝔻}$ is defined as
  \[γ((s,Δ)) = ｛s'｜∀_{t∈\Ticks(s)} s(t) ∈ γ(s'(t)) \wedge ∀_{t∈Δ\setminus\Ticks(s)} s(t) = s'(t) ｝\]
\end{definition}
	
\noindent
If the data abstraction is defined in terms of a Galois Connection a
refinement ordering and abstraction function can be obtained.
The refinement ordering $(𝓟_𝔻, \preceq)$ is defined as $(s_1, Δ_1) \preceq (s_2,
Δ_2)$ iff $Δ_1 ⊇ Δ_2$ and
$∀_{t∈\Ticks(s_2)} s_1(t) \preceq s_2(t) \wedge ∀_{t∈Δ_2\setminus\Ticks(s_2)} s_1(t) = s_2(t)$.
The abstraction function $α: 2^{𝓢_𝔻} → 𝓟_𝔻$ is defined as $α(S) =
\mathsf{sup}｛(s, 𝕋)｜s ∈ S｝$.
Note, if the data abstraction is defined in terms of a Galois
Connection, $(α, γ)$ is a Galois Connection between $2^{𝓢_𝔻}$ and
$𝓟_𝔻$.

An abstract event stream $s = (s',\Delta) \in 𝓟_𝔻$ can also be seen as a function
$s: \mathbb{T} \rightarrow \mathbb{D}^\# \cup \{\mathrm{?},\bot,\gap\}$ with
$s(t) = s'(t)$ if $t \in \Delta$ and $s(t) = \gap$ otherwise.
A particular point $t$ of an abstract event stream $s$ can be either
\begin{inparaenum}[(a)]
  \item directly at an event ($s(t) \in \mathbb D$), 
  \item in a gap ($s(t) = \gap$),
  \item in a gapless segment without an event at $t$ ($s(t) = \bot$), or
  \item after the known end of the stream ($s(t) = \mathrm{?}$).
\end{inparaenum}

We denote $\mathbb{D}^\#_\bot \DefinedAs \mathbb{D}^\# \cup \{\bot,
\gap\}$.  
If $\mathbb{D}^\#$ is a data abstraction of a data domain $\mathbb{D}$
with an associated concretisation function $\gamma$, then
$\mathbb{D}^\#_\bot$ is a data abstraction of $\mathbb{D}_\bot$ with
an associated concretisation function $\gamma_\bot: \mathbb{D}_\bot^\#
\to 2^{\mathbb{D} \cup \{\bot\}}$ with

\vskip1ex\par\noindent
\begin{minipage}{7cm}
\[
\gamma_\bot(d) = \begin{cases}
  \bot                   & \text{if } d=\bot \\
  \mathbb{D}\cup\{\bot\} & \text{if } d=\gap\\
  \gamma(d)              & \text{if } d\in\mathbb{D}^\#
  \end{cases}
\]
\end{minipage}
\begin{minipage}{4cm}
\begin{tikzpicture}[on grid, node distance=12mm and 6mm, inner sep=2pt]
  \node (abs true) {tt};
  \node[right=of abs true] (top) {$\top$};
  \node[right=of top] (abs false) {ff};
  \node[right=12mm of abs false, gray] (gap) {$\gap$};
  \node[right=of gap, gray] (abs bot) {$\bot$};
  \node[below=of abs true] (true) {tt};
  \node[below=of abs false] (false) {ff};
  \node[below=of abs bot, gray] (bot) {$\bot$};

  \path[->]
    (abs true) edge node[left] {$\gamma$} (true)
    (top.south) edge (true) edge (false)
    (abs false) edge (false)
    ($(gap.south) + (0,3pt)$) edge[gray] (true) edge[gray] (false) edge[gray] (bot)
    (abs bot) edge[gray] node[gray, right] {$\gamma_\bot$} (bot);

  \begin{pgfonlayer}{background}
    \node[fit=(abs true) (abs bot), fill=blue!10, rounded corners=3pt, inner sep=4pt] (B abs bot) {};
    \node[fit=(abs true) (abs false), fill=blue!25, rounded corners=3pt, inner sep=2pt] (B abs) {};
    \node[fit=(true) (bot), fill=red!10, rounded corners=3pt, inner sep=4pt] (B bot) {};
    \node[fit=(true) (false), fill=red!25, rounded corners=3pt, inner sep=2pt] (B) {};
  \end{pgfonlayer}

  \begin{scope}[inner sep=0pt]
    \node[right=1pt of B abs bot.north east,anchor=north west, gray] {$\mathbb B^\#_\bot$};
    \node[right=1pt of B abs.north east,anchor=north west] {$\mathbb B^\#$};
    \node[right=1pt of B bot.south east,anchor=south west, gray] {$\mathbb B_\bot$};
    \node[right=1pt of B.south east,anchor=south west] {$\mathbb B$};
  \end{scope}
\end{tikzpicture}
\end{minipage}
\vskip1ex\par
\noindent The above diagram shows a possible data abstraction
$\mathbb{B}^\#$ of $\mathbb{B}$ and the corresponding data
abstraction $\mathbb{B}^\#_\bot$.
Using the functional representation of an abstract event stream we can now define the abstract counterparts of the TeSSLa operators:

\KWabsbullet
$\NILabs=(\infty,\mathbb T) \in \mathcal{P}_\emptyset$ is the empty abstract stream without any gaps.

\KWabsbullet
$\UNITabs=(0\unitsym\infty, \mathbb T) \in \mathcal{P}_{\mathbb U}$ is the abstract stream without any gaps and a single event at timestamp 0.

\KWabsbullet
$\TIMEabs: \mathcal{P}_\mathbb{D} \rightarrow \mathcal{P}_\mathbb{T}, \TIMEabs(s) := z$
is equivalent to its concrete counterpart; only the data domain is extended:
$z(t) = t$ if $t \in \Ticks(s)$ and $z(t) = s(t)$ otherwise.

\KWabsbullet
$\LIFTabs: ({\mathbb{D}_1}^\#_\bot \times \dots
\times {\mathbb{D}_n}^\#_\bot \rightarrow \mathbb{D}^\#_\bot)
\rightarrow (\mathcal{P}_{\mathbb{D}_1} \times \dots \times
\mathcal{P}_{\mathbb{D}_n} \rightarrow \mathcal{P}_\mathbb{D}),
\LIFTabs(f^\#)(s_1, \ldots, s_n) := z$ can be defined similarly to
its concrete counterpart, because the abstract function $f^\#$ takes
care of the gaps:
\[
z(t) =\begin{cases}
  f^\#(s_1(t), \dots, s_n(t)) & \text{if $s_1\neq\mathrm{?},\dots, s_n\neq\mathrm{?}$} \\
  ? & \text{otherwise}
\end{cases}
\]
The operator $\LIFTabs$ is restricted to those functions $f^\#$ that
are an abstraction of functions $f$ that can be used in \LIFT, that is,
$f(\bot,\dots,\bot) = \bot$.
Using the abstract lift we can derive the abstract counterparts of
$\DEF{const}$ and $\DEF{merge}$:

\KWlightbullet
$\DEF{const}^\#(c)(a) := \LIFTabs(f_c)(a)$ with $f_c(d) := c$ if
$d \neq \gap$ and $f_c(\gap) := \gap$ otherwise
maps all events' values to a constant while preserving the gaps.
Using $\DEF{const}^\#$ we can define constant signals without any gaps, e.g.,
$\DEF{true}^\# := \DEF{const}^\#(\DEF{true})(\KW{unit}^\#)$ or
$\DEF{zero}^\# := \DEF{const}^\#(0)(\KW{unit}^\#)$.

\KWlightbullet
$\DEF{merge}^\#(x, y) := \LIFTabs(f)(x, y)$ with
$f(a \not\in \{\gap,\bot\}, b) = a$,
$f(⊥, b) = b$,
$f(\gap, b \in \{\gap,\bot\}) = \gap$, and
$f(\gap, b \not\in \{\gap,\bot\}) = \top$.

\begin{wrapfigure}[3]{r}{40mm}
\vskip-8mm\raggedleft
\begin{streampicture}[scale=.7]
  \begin{stream}{x}
    \definedto{4}
    \eventat[red]{2}
    \eventat[red]{3}
    \undefinedto{6}
    \definedto{10}
    \eventat[red]{8}
    \gapat{9}
  \end{stream}
  \begin{stream}{y}
    \definedto{5}
    \eventat[blue]{1}
    \eventat[blue]{3}
    \undefinedto{7}
    \definedto{10}
    \gapat{8}
    \eventat[blue]{9}
  \end{stream}
  \begin{stream}{z}
    \definedto{4}
    \eventat[blue]{1}
    \eventat[red]{2}
    \eventat[red]{3}
    \undefinedto{7}
    \eventat[red]{8}
    \topat{9}
    \definedto{10}
  \end{stream}
\end{streampicture}
\end{wrapfigure}

The diagram on the right shows an example trace merging the events of
the streams $x$ and $y$. 
The symbol $\circ$ indicates a point-wise gap.
Note how an event on the first stream takes precedence over a gap on
the second stream, but not the other way round, similarly to how
events from the first stream are prioritized if both streams have an
event at the same timestamp.

\KWabsbullet
$\LASTabs: \mathcal{P}_{\mathbb{D}_1} \times \mathcal{P}_{\mathbb{D}_2}
\rightarrow \mathcal{P}_{\mathbb{D}_1}, \LASTabs(v,r) := z$
has three major extensions over its concrete counterpart:
\begin{compactenum}[(1)]
\item $\top$ is added as an output in case an event on $r$
  occurs and there were events on the stream $v$ of values
  but all followed by a gap. 
\item $\gap$ is outputted for all gaps on the stream $r$ of trigger events if
  there have been events on the stream $v$ of values.
\item $\gap$ can also be output if an event occurs on $r$
  and no event occurred on $v$ before except for a
  gap.
\end{compactenum}
The parts similar to the concrete operator are typeset in gray:
\[
z(t) = \left\{
\begin{array}{l@{\ \ }l@{\ \ }l}
  d & \color{gray} t\in\Ticks(r) \wedge \exists_{t' < t}\AUX{isLast}(t,t',v,d) \color{black} \wedge  \forall_{t''|t'<t''<t}v(t'')\neq\gap\\
  \top & t\in\Ticks(r) \wedge \exists_{t' < t}\AUX{isLast}(t,t',v,\gap) \wedge  \exists_{t''|t'<t''<t}v(t'')=\gap & (1)\\
  \bot & \color{gray} r(t) = \bot \wedge  \AUX{defined}(z,t) \vee \forall_{t'<t} v(t') = \bot \\
  \gap & \AUX{defined}(z,t) \wedge r(t)=\gap \wedge \exists_{t'<t}v(t')\neq\bot & (2) \\
  \gap & \AUX{defined}(z,t) \wedge t\in\Ticks(r) \wedge \forall_{t'<t}t'\notin\Ticks(v) \wedge \exists_{t'<t}v(t')=\gap & (3) \\
   \textrm{?} & \color{gray}\text{otherwise}
\end{array}
\right.
\]

\begin{wrapfigure}[4]{r}{40mm}
\vskip-8mm\raggedleft
\begin{streampicture}[y=7mm,x=7mm]
  \begin{stream}{v}
    \undefinedto{2}
    \definedto{7}
    \eventat{3}
    \gapat{4}
    \eventat{6}
    \undefinedto{8}
    \definedto{10}
  \end{stream}
  \begin{stream}{r}
    \definedto{8}
    \eventat{1}
    \eventat{5}
    \eventat{6}
    \eventat{7}
    \undefinedto{9}
    \definedto{10}
  \end{stream}
  \begin{stream}{z}
    \definedto{8}
    \gapat{1}
    \topat{5}
    \topat{6}
    \eventat{7}
    \undefinedto{9}
    \definedto{10}
  \end{stream}
  \node[every label] at (1,-8) {(3)};
  \node[every label] at (5.5,-8) {(1)};
  \node[every label] at (7,-8) {(d)};
  \node[every label] at (8.5,-8) {(2)};
\end{streampicture}
\end{wrapfigure}

The trace diagram on the right shows an example trace covering most
edge cases of the abstract last. 
The output stream $z$ is a point-wise gap if triggered after
initial gaps (3); $z$ is $\top$ if triggered after non-initial gaps (1);
$z$ is an event if triggered after a gapless sequence (d); and $z$
inherits all gaps from the stream of trigger events (2).

We can now combine the $\LASTabs$ and the $\LIFTabs$ operators to
realize:

\KWlightbullet
\emph{abstract signal lift} for total functions $f:
\mathbb D \times \mathbb D' \to \mathbb D''$ as $\DEF{slift}^\#(f)(x,
y) := \LIFTabs(g_f)(x', y')$ with $x' := \DEF{merge}^\#(x, \LAST^\#(x,
y))$ and $y' := \DEF{merge}^\#(y, \LAST^\#(y, x))$, as well as $g_f(a
\not\in \{\gap,\bot\}, b \not\in \{\gap,\bot\}) = f(a, b)$, $g_f(⊥, b)
= g_f(a, ⊥) = ⊥$, $g_f(\gap, \gap) = \gap$, and $g_f(\gap, b
\not\in\{\gap,\bot\}) = g_f(a \not\in\{\gap,\bot\}, \gap) = \gap$.

\begin{wrapfigure}[6]{r}{6cm}
\vskip-9mm\raggedleft
  \begin{streampicture}[yscale=1.4,xscale=.68]
    \begin{stream}{\DEF{values}}
      \definedto{8}
      \eventat[val=3]{1}
      \eventat[val=2]{3}
      \eventat[val=4]{4}
      \gapat{5}
      \eventat[val=5]{7}
      \undefinedto{10}
      \definedto{14}
      \eventat[val=3]{11}
      \eventat[val=6]{12}
      \eventat[val=1]{13}
    \end{stream}
    \streamskip[-1]
    \begin{stream}{\DEF{resets}}
      \definedto{14}
      \eventat{2}
      \eventat{6}
      \eventat{9}
      \eventat{12}
    \end{stream}
    \begin{stream}{\DEF{cond}}
      \definedto{8}
      \eventat[val=tt]{2}
      \eventat[val=ff]{3}
      \eventat[val=ff]{4}
      \gapat{5}
      \topat{6}
      \eventat[val=ff]{7}
      \undefinedto{10}
      \topat{9}
      \definedto{14}
      \eventat[val=ff]{11}
      \eventat[val=tt]{12}
      \eventat[val=ff]{13}
    \end{stream}
    \begin{stream}{\DEF{sum}}
      \definedto{8}
      \eventat[val=0]{2}
      \eventat[val=2]{3}
      \eventat[val=6]{4}
      \gapat{5}
      \topat{6}
      \topat{7}
      \undefinedto{10}
      \topat{9}
      \definedto{14}
      \topat{11}
      \eventat[val=0]{12}
      \eventat[val=1]{13}
    \end{stream}
  \end{streampicture}
\end{wrapfigure}

\refstepcounter{example}
\paragraph{Example \theexample.}
\label{abstract-running-example}
By replacing every TeSSLa operator in \autoref{running-example} with
their abstract counterparts and applying it to the abstract input streams
$\DEF{values} \in 𝓟_{\mathbb Z}$ and $\DEF{resets} \in 𝓟_{\mathbb U}$,
we derive the abstract stream $\DEF{cond} \in 𝓟_{\mathbb B}$ and the recursively
derived abstract stream $\DEF{sum} \in 𝓟_{\mathbb Z}$:
After the large gap on $\DEF{values}$, the $\DEF{sum}$ stream
eventually recovers completely.
The first reset after the point-wise gap does not lead to full recovery,
because at that point the last event on \DEF{values} cannot be accessed,
because of the prior gap.
The next reset falls into the gap, so again $\DEF{cond}$ cannot be
evaluated.
In a similar fashion one can define an abstract \KWabs{delay} operator as
counterpart of the concrete \KW{delay}.
See Appendix~\ref{app:delay} for details.

Following from the definitions of the abstract \tessla operators we
get:

\begin{theorem}
  \label{thm:abstract-tessla}
  Every abstract \tessla operator is an abstraction of its concrete
  counterpart.
\end{theorem}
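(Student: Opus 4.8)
My plan is to unfold the definition of \emph{abstraction} from the preliminaries in this concrete setting and then verify it operator by operator. Here $A = 2^{\mathcal{S}_{\mathbb D}}$ ordered by $\subseteq$, $B = \mathcal{P}_{\mathbb D}$, and $\gamma$ is the concretisation of Definition~\ref{def:pes}; for a concrete operator $\mathit{op}$ the relevant monotone map on $A$ is its elementwise lifting $S \mapsto \{\mathit{op}(c) \mid c \in S\}$, which is plainly $\subseteq$-monotone, so the premise of the abstraction definition holds. Thus the theorem reduces, for every operator and all abstract arguments $s_1,\dots,s_n$, to the single containment
\[
\{\, \mathit{op}(c_1,\dots,c_n) \mid c_i \in \gamma(s_i) \,\} \subseteq \gamma\big(\mathit{op}^\#(s_1,\dots,s_n)\big).
\]
Since both $\gamma$ and all the TeSSLa operators are defined pointwise in the timestamp $t$, I would discharge each containment as a timestamp-by-timestamp check that the concrete output value $\mathit{op}(c_1,\dots,c_n)(t)$ lies in the set prescribed by Definition~\ref{def:pes} for $\mathit{op}^\#(s_1,\dots,s_n)(t)$.

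The constant and simple unary operators I expect to be immediate. For $\NILabs$ and $\UNITabs$ the index set $\Delta=\mathbb T$ has no gaps, so $\gamma(\NILabs)=\{\NIL\}$ and $\gamma(\UNITabs)=\{\UNIT\}$ are singletons and the containment is an equality. For $\TIMEabs$ I would note that on $\Delta$ the operator leaves both the tick set and $\Delta$ itself unchanged and merely rewrites each event value to its timestamp $t$; hence the two conjuncts of Definition~\ref{def:pes} transfer from $s$ to $\TIMEabs(s)$, while gaps are unconstrained on both sides. For $\LIFTabs$ the essential ingredient is the standing hypothesis that $f^\#$ is an abstraction of $f$ with respect to $\gamma_\bot$: at each $t$ where the inputs are defined this gives $f(c_1(t),\dots,c_n(t)) \in \gamma_\bot\big(f^\#(s_1(t),\dots,s_n(t))\big)$, and the gap/$\top$ behaviour — including the constraint $f(\bot,\dots,\bot)=\bot$ that stops gaps turning spuriously into events — is exactly what $f^\#$ is required to encode on the arguments $\gap$. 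The derived $\DEF{const}^\#$ and $\DEF{merge}^\#$ then inherit soundness as instances of $\LIFTabs$.

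The substantial case is $\LASTabs$. Fixing $v' \in \gamma(v)$ and $r' \in \gamma(r)$, I would place $\LAST(v',r')(t)$ inside the set dictated by each line of the abstract definition. When the look-back window before a trigger $t$ is gapless, the abstract operator reproduces the abstract last value $d$ (the grey line, with the guard $\forall_{t''\mid t'<t''<t} v(t'')\neq\gap$), and $\AUX{isLast}$ forces the concrete last value of $v'$ into $\gamma(d)$. The delicate lines are the three extensions: line (1) returns $\top$ precisely when the most recent value position seen by $\AUX{isLast}$ may be a gap, so distinct concretisations can carry distinct last values and soundness rests on $\gamma(\top)=\mathbb D$ swallowing them all; lines (2) and (3) emit $\gap$ exactly where the output is itself in a gap, and there Definition~\ref{def:pes} imposes no constraint on $\LAST(v',r')(t)$, making the containment vacuous. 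The predicate $\AUX{defined}(z,t)$ ensures these gaps appear only once the output is otherwise determined, matching the prefix behaviour of the concrete $\LAST$. The abstract $\DELAYabs$ follows the same pattern and I would defer it to Appendix~\ref{app:delay}.

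The main obstacle will be this $\LASTabs$ case, and specifically the interaction of the temporal look-back with gaps on the value stream: a single gap in $v$ erases all knowledge of the ``last'' value for every subsequent trigger until a fresh definite event arrives. The crux is therefore to show that the placement of $\top$ and of $\gap$ in the abstract output over-approximates the last value realised by \emph{every} concretisation at once. I expect the real work to be reading guards (1)--(3) as an exact characterisation, via $\AUX{isLast}$, of when a concretisation can still pin down the last value, when it can realise an arbitrary value (forcing $\top$), and when the trigger itself lands in a gap (forcing $\gap$); once that correspondence is nailed down, the containment on each line should follow without further calculation.
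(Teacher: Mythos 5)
Your proposal is correct, and its operator-by-operator case analysis matches the structure the paper relies on; note, however, that the paper never writes out a standalone proof of this theorem. It asserts that the statement ``follows from the definitions'' and instead proves the stronger Theorem~\ref{theorem:abstraction} (every abstract operator is a \emph{perfect} abstraction, $o^\#(\vec s)=\alpha(o(\gamma(\vec s)))$), from which the present claim is recovered via the Galois connection. You prove the one-sided containment $o(\gamma(\vec s))\subseteq\gamma(o^\#(\vec s))$ directly, which buys two things: the argument never invokes $\alpha$ or the Galois-connection machinery (so it also applies when the data abstraction is not given by a Galois connection), and for $\LIFTabs$ it only requires $f^\#$ to be an abstraction of $f$ rather than a perfect one. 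Your key observation --- that the $\gap$-producing lines of $\LASTabs$ (and $\DELAYabs$) are vacuous for soundness because the $\gamma$ of Definition~\ref{def:pes} constrains concretisations only on $\Delta$ --- is exactly why soundness is so much cheaper than perfection, where those lines carry the real burden (there one must additionally show that the concretisations genuinely disagree at such points). What remains to check in your route are only the event-, $\top$- and $\bot$-producing lines, and your treatment of these (a gapless look-back pins the concrete last value into $\gamma(d)$; a definite earlier event followed by a gap forces an event whose value is absorbed by $\gamma(\top)=\mathbb D$) is the right argument; just make sure, as you implicitly do for the $\top$ line, that a definite event on $v$ before the trigger exists in \emph{every} concretisation, since $\gamma(z)$ requires the concrete output to actually tick wherever $z$ does.
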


Theorem~\ref{thm:abstract-tessla} implies that abstract \tessla
operators are sound in the following way.
Let $o$ be a concrete \tessla operator with the abstract counterpart
$o^\#$ and let $s \in \mathcal P_{\mathbb D}$ be an abstract event
stream with a concretization function $\gamma$.  
Then, \( o(\gamma(s)) \preceq \gamma(o^\#(s)). \)
Since abstract interpretation is compositional we can directly follow
from the above theorem:
\begin{corollary}
  \label{cor:abstract-tessla}
  If a concrete TeSSLa specification $\phi$ is transformed into a
  specification $\psi$ by replacing every concrete operator in $\phi$ with its
  abstract counterpart, then $\psi$ is an abstraction of $\phi$.
\end{corollary}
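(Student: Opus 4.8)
The plan is to prove the corollary by structural induction on the way $\phi$ is built from the basic operators, using Theorem~\ref{thm:abstract-tessla} as the base case and two standard facts of abstract interpretation — that the composition of sound abstractions is again a sound abstraction, and that least fixed points are preserved by sound abstraction — for the inductive steps. Throughout I would work with the semantic functions lifted to the collecting level: a specification denotes the least fixed point of the functional induced by its defining equations, and applying $\phi$ to a set $S \subseteq \mathcal{S}_{\mathbb D}$ of concrete input tuples denotes the least fixed point of the powerset-lifted functional, which contains every individual output $\phi(\bar s)$ for $\bar s \in S$. Unfolding the definition of abstraction, the goal then becomes $\phi(\gamma(\bar s)) \preceq \gamma(\psi(\bar s))$ for every abstract input tuple $\bar s$, where $\gamma$ is the concretisation of Definition~\ref{def:pes}.

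First I would record the compositionality lemma. Suppose $f^\#$ abstracts $f$ and $g^\#$ abstracts $g$ with respect to the same $\gamma$, and that the concrete $f$ is monotone. Then for every $b$,
\[
  (f \circ g)(\gamma(b)) = f\bigl(g(\gamma(b))\bigr) \preceq f\bigl(\gamma(g^\#(b))\bigr) \preceq \gamma\bigl(f^\#(g^\#(b))\bigr),
\]
where the first inequality is monotonicity of $f$ applied to $g(\gamma(b)) \preceq \gamma(g^\#(b))$ and the second is the abstraction property of $f^\#$ instantiated at $g^\#(b)$. The same chaining, carried out componentwise, handles the multi-ary operators $\LIFT$, $\LAST$ and $\DELAY$. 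Monotonicity comes for free here because the powerset lifting of any concrete operator is monotone with respect to the collecting order, so this shows that any finite, non-recursive composition of abstract operators abstracts the corresponding composition of concrete operators. In particular the functional $F_\psi$ induced by the right-hand sides of $\psi$ abstracts the functional $F_\phi$ induced by the right-hand sides of $\phi$, i.e.\ $F_\phi \circ \gamma \preceq \gamma \circ F_\psi$ pointwise.

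It remains to transfer this soundness from the functionals to their least fixed points, which is the heart of the argument. Working in the complete lattice of sets of streams on the concrete side and the refinement lattice $(\mathcal{P}_{\mathbb D}, \preceq)$ on the abstract side, I would argue that $\gamma(\mathrm{lfp}\,F_\psi)$ is a pre-fixed point of $F_\phi$: indeed
\[
  F_\phi\bigl(\gamma(\mathrm{lfp}\,F_\psi)\bigr) \preceq \gamma\bigl(F_\psi(\mathrm{lfp}\,F_\psi)\bigr) = \gamma(\mathrm{lfp}\,F_\psi),
\]
using the functional soundness just established and the fact that $\mathrm{lfp}\,F_\psi$ is a fixed point of $F_\psi$. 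Since $\mathrm{lfp}\,F_\phi$ is, by Knaster--Tarski, the least pre-fixed point of $F_\phi$, we conclude $\mathrm{lfp}\,F_\phi \preceq \gamma(\mathrm{lfp}\,F_\psi)$, which is precisely the claim $\phi(\gamma(\bar s)) \preceq \gamma(\psi(\bar s))$ after reinstating the input parameter. Equivalently, since $(\alpha,\gamma)$ is a Galois connection, one may phrase this as the Cousot--Cousot fixed-point transfer theorem and invoke it directly.

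The step I expect to be the main obstacle is this last fixed-point transfer, because it is the only place where the lattice-theoretic hypotheses must be checked rather than merely chained: one has to confirm that both the concrete collecting domain and the abstract domain carry enough structure (a complete lattice, or a pointed CPO together with continuity of $\gamma$ and of the operators) for the least fixed points to exist and for the pre-fixed-point argument — or, in the CPO formulation, the Kleene-chain induction $F_\phi^n(\bot) \preceq \gamma(F_\psi^n(\bot^\#))$ — to go through. The base case and the compositional case are, by contrast, direct consequences of Theorem~\ref{thm:abstract-tessla} and monotonicity, requiring no further work beyond the inequality chaining shown above.
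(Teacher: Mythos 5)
Your proposal is correct and takes essentially the same route as the paper, which dispatches this corollary with the single remark that abstract interpretation is compositional and that Theorem~\ref{thm:abstract-tessla} supplies soundness of each individual operator. You merely spell out what that remark compresses --- the inequality chaining through monotone composition and the fixed-point transfer for the recursive equations --- so the two arguments do not differ in substance.
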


Theorem~\ref{thm:abstract-tessla} guarantees that applying 
abstract \tessla operators to the abstract event stream is still sound
regarding the underlying set of possible concrete event streams.
However, we have established no result so far about the accuracy of
the abstract \tessla operators.
The abstraction returning only the completely unknown stream ($\Delta =
\emptyset$) is sound but useless.
The following theorem states, that our abstract \tessla operators are
optimal in terms of accuracy.
Using a perfect abstraction guarantees the abstract \tessla operators
preserve as much information as can possibly be encoded in the
resulting abstract event streams.

\begin{theorem}
  \label{theorem:abstraction}
  Every abstract \tessla operator is a perfect abstraction of its
  concrete counterpart.
\end{theorem}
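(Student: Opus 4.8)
\noindent\emph{Proof idea.}
Recall that, by definition, $o^\#$ is a \emph{perfect} abstraction of the (pointwise set-lifted) concrete operator $o$ exactly when $o^\#(s) = \alpha(o(\gamma(s)))$ for every abstract input $s$, where $\alpha(S) = \mathsf{sup}\{(r,\mathbb T)\mid r\in S\}$. Since both the underlying data abstractions and the stream-level pair $(\alpha,\gamma)$ are Galois Connections, Theorem~\ref{thm:abstract-tessla} already supplies one half for free: soundness says $o(\gamma(s)) \preceq \gamma(o^\#(s))$, and transposing across the Galois Connection this is equivalent to $\alpha(o(\gamma(s))) \preceq o^\#(s)$. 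The real content of Theorem~\ref{theorem:abstraction} therefore lies entirely in the reverse inequality $o^\#(s) \preceq \alpha(o(\gamma(s)))$, i.e.\ in showing that $o^\#$ never discards more information than is strictly forced by the set of concrete results.

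The plan is to reduce this to a pointwise statement. Because the refinement order on $\mathcal P_{\mathbb D}$ is determined timestamp by timestamp (at each $t$ the values of the concretizations are joined in $\mathbb D^\#$, a disagreement on the carried value being forced up to $\top$ and a disagreement on the mere presence of an event being forced to $\gap$), it suffices to prove for every $t$ that
\[
  o^\#(s)(t) = \mathsf{sup}\{\, r(t) \mid r \in o(\gamma(s)) \,\},
\]
the join being taken in $\mathbb D^\# \cup \{\bot,\gap,\mathrm{?}\}$. For the trivial operators this is immediate: $\NILabs$ and $\UNITabs$ have singleton concretizations, and $\TIMEabs$ is a pure relabelling that commutes with $\gamma$. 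For $\LIFTabs$ the claim follows from the perfection of the data abstraction together with the perfection of $\gamma_\bot$: since $\LIFT$ acts pointwise and $f^\#$ is assumed to be the best abstraction of $f$, the join over all concretizations of $f(r_1(t),\dots,r_n(t))$ is exactly $f^\#(s_1(t),\dots,s_n(t))$.

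The main obstacle is $\LASTabs$ (and, by the same pattern, $\DELAYabs$), whose defining case split must be shown tight case by case. For each output I would exhibit explicit concrete streams in $\gamma(s)$ that witness the claimed join and no smaller element. Where $\LASTabs$ emits a genuine value $d$, the last event before $t$ is the same across all concretizations, so the join collapses to $d$. Where it emits $\top$ (case~(1)), I construct two members of $\gamma(s)$ whose most recent value at $t$ genuinely differs---one realizing the last committed event value, one realizing a different value inside the trailing gap---so the join is forced exactly up to $\top$. Where it emits a gap (cases~(2) and~(3)), I exhibit concretizations that disagree on whether an event occurs at $t$ at all (an event versus no event), a disagreement that the order can reconcile neither in $\mathbb D^\#$ nor at $\bot$, so $\gap$ is not merely sound but necessary.

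The delicate bookkeeping, and the place I expect to spend the most effort, is matching the existential and universal quantifiers over the auxiliary witnesses $t'$ and $t''$ appearing in the $\LASTabs$ cases to concrete streams that (a)~actually lie in $\gamma(s)$---respecting both the known-timestamp set $\Delta$ and the per-event data concretizations $\gamma$---and (b)~produce precisely the intended disagreement at $t$ while agreeing everywhere else. The operator $\DELAYabs$ adds a further layer, since its witnesses must additionally realize or suppress events at \emph{computed} timestamps; the join-tightness argument is identical in spirit and is carried out in Appendix~\ref{app:delay}. Combining the pointwise tightness for every operator with the soundness inequality from Theorem~\ref{thm:abstract-tessla} yields $o^\# = \alpha \circ o \circ \gamma$ for each operator, which is exactly the claim.
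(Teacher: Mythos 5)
Your proposal matches the paper's proof in substance: both establish $o^\# = \alpha\circ o\circ\gamma$ operator by operator, dispatching \NILabs, \UNITabs and \TIMEabs trivially, reducing \LIFTabs to the assumed perfection of $f^\#$, and doing the real work in a case analysis of \LASTabs and \DELAYabs that argues each output case (the $\top$ case, the two gap cases, and the delay's point-wise and long gaps) is exactly what $\alpha$ yields on the set of concrete results obtained through $\gamma$. Your extra scaffolding---taking one inequality for free from Theorem~\ref{thm:abstract-tessla} via the Galois Connection and then proving only pointwise join-tightness by exhibiting witnessing concretizations---is a cleaner organization of the same argument rather than a different route.
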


Given a concrete \tessla operator $o$ and its abstract counterpart
$o^\#$, and any abstract event stream $s \in \mathcal P_{\mathbb D}$
with the Galois Connection $(\alpha,\gamma)$ between $2^{\mathcal
  S_{\mathbb D}}$ and $\mathcal P_{\mathbb D}$ one can show that
\( o^\#(s) = \alpha(o(\gamma(s)). \)
Applying the abstract operator on the abstract event stream is as good
as applying the concrete operator on every possible event stream
represented by the abstract event stream. 
Thus $o^\#$ is a perfect abstraction of $o$.
(The detailed proof can be found in Appendix \ref{app:proof}.)
Note that we assume that $f^\#$ is a perfect abstraction of $f$ to
conclude that $\LIFTabs(f^\#)$ is a perfect abstraction of $\LIFT(f)$.

In Corollary~\ref{cor:abstract-tessla} we have shown that a specification $\psi$
(generated by replacing the concrete TeSSLa operator in $\phi$ with their abstract
counterparts) is an abstraction of $\phi$. Note that $\psi$ is in general not a
perfect abstraction of $\phi$. We study some special cases of perfect abstractions of
compositional specifications in Section~\ref{sec:compositional}.

The next result states that the abstract operators can be defined in
terms of concrete \tessla operators. Realizing the abstract
operators in TeSSLa does not require an enhancement in the expressivity of
\tessla.

\begin{theorem}
  \label{thm:realizability}
  The semantics of the abstract \tessla operators can be encoded in
  \tessla using only the concrete operators.
\end{theorem}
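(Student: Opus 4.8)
The plan is to fix a representation of each abstract event stream by finitely many concrete streams and then to exhibit, for every abstract operator, a concrete \tessla term over $\KW{nil},\KW{unit},\KW{time},\KW{lift},\KW{last},\KW{delay}$ that computes its semantics on these representations. Following the functional view $s:\mathbb T\to\mathbb D^\#\cup\{\mathrm{?},\bot,\gap\}$, I would represent an abstract stream $(s,\Delta)\in\mathcal P_{\mathbb D}$ by a \emph{value stream} $v\in\mathcal S_{\mathbb D^\#}$ carrying the abstract value $s(t)$ at every $t\in\Ticks(s)$, together with a \emph{gap stream} $g$ whose events sit exactly on the (inclusive or exclusive) boundaries of $\Delta$ guaranteed by Definition~\ref{def:pes}, marking a gap's start with $\testrue$ and its end with $\tesfalse$. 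The key observation is that, because $\Delta$ is a union of intervals delimited by events, the predicate ``$t$ lies in a gap'' is piecewise constant and hence recoverable at any timestamp as $\KW{last}(g,\cdot)$; thus the full abstract state of each input --- a value, $\bot$, or $\gap$ --- is available as a signal that concrete \tessla can sample.

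First I would dispatch the trivial operators: $\NILabs$ and $\UNITabs$ are the pairs $(\KW{nil},\text{empty gap stream})$ and $(\KW{unit},\text{empty gap stream})$, and $\TIMEabs$ applies the concrete $\KW{time}$ to the value stream while copying the gap stream verbatim, matching its definition clause by clause. For $\LIFTabs(f^\#)$ the difficulty is that the concrete $\KW{lift}$ fires only where all inputs carry events, whereas $\LIFTabs$ must also evaluate $f^\#$ inside gaps. I would therefore build, for each input $i$, a sampled stream $\hat s_i:=\DEF{merge}\big(\text{value events of }i,\ \KW{last}(\text{state of }i,\,c)\big)$ over a common clock $c$ obtained by merging all event and gap-boundary timestamps; then $\hat s_i$ carries the genuine abstract state $s_i(t)\in\mathbb D^\#_\bot$ at every interesting $t$, and a single concrete $\KW{lift}$ applying $f^\#$ to the $\hat s_i$ produces all output events and gap changes, from which the output value and gap streams are read off. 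The derived $\DEF{const}^\#$ and $\DEF{merge}^\#$ follow immediately, as they were already defined as instances of $\LIFTabs$.

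The hard part will be $\LASTabs$, whose semantics layers the three extensions --- producing $\top$, inheriting gaps from $r$, and the initial-gap case --- on top of the concrete $\KW{last}$. Concrete $\KW{last}(v,r)$ already supplies the last value $d$; what the encoding must additionally decide is whether a gap intervened on $v$ between that last event and the current trigger (so the output is $\top$ rather than $d$, clause~(1)) and whether the trigger itself lies in a gap on $r$ (so the output is $\gap$, clause~(2)), together with the purely initial-gap situation (clause~(3)). My plan is to maintain, as signals, the timestamp of the last genuine event on $v$ --- via $\KW{last}(\KW{time}(v),c)$ --- and the timestamp of the last gap on $v$, obtained analogously from the gap stream of $v$; a lifted order comparison of the two separates the $d$-case (where $\forall_{t''}\,v(t'')\neq\gap$) from clause~(1), while sampling the gap signal of $r$ at the trigger realises clause~(2) and detecting ``no prior event on $v$, only a prior gap'' realises clause~(3). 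Finally, $\DEF{slift}^\#$ follows since it merely composes $\LIFTabs$ and $\LASTabs$, and $\DELAYabs$ is treated in the same style, deferring its routine case analysis to Appendix~\ref{app:delay}. The main obstacle is precisely discharging every boundary case of $\LASTabs$ --- initial versus non-initial gaps, point-wise gaps, and the interplay of inclusive and exclusive boundaries --- so that the reconstructed signals agree with the piecewise definition of $z$ at every timestamp; the remaining verification is a clause-by-clause, timestamp-by-timestamp comparison.
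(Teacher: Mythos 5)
Your route is correct in spirit but genuinely different from the paper's. The paper proves Theorem~\ref{thm:realizability} in two lines: it observes that every abstract operator is monotone and future independent (the output up to $t$ depends only on the inputs up to $t$) and then invokes the expressiveness result of \cite{tessla}, which says that \tessla with its concrete operators can express \emph{every} such stream transformation. That argument is non-constructive but essentially free once monotonicity and future independence are checked. Your proposal instead builds the encoding by hand --- representing $(s,\Delta)$ as a value stream plus a gap-boundary stream and assembling each abstract operator from $\KW{lift}$, $\KW{last}$, and $\KW{time}$ over a common clock --- which is much closer to what the paper actually does in its \emph{implementation} (Section~\ref{sec:evaluation}, where the gap stream ranges over six boundary kinds, later collapsed to two Booleans under a finite-time-domain assumption). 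What your approach buys is an explicit, implementable witness; what it costs is exactly the clause-by-clause verification of $\LASTabs$ and $\DELAYabs$ that you defer, plus two issues you should not gloss over: (i) a plain $\testrue/\tesfalse$ gap stream does not by itself distinguish inclusive from exclusive boundaries, point-wise gaps, or point-wise events inside a gap, so your representation needs the richer alphabet (or the $\epsilon$-shift trick) before the ``recover the gap predicate via $\KW{last}(g,\cdot)$'' step is sound; and (ii) when these encodings are substituted into recursive specifications they introduce new unguarded cycles, which the paper has to repair separately by the unrolling of Definition~\ref{def:unrolling} in Section~\ref{sec:well-formedness} --- a complication the paper's abstract expressiveness argument sidesteps entirely because it only asserts expressibility of each operator as a function, not well-formedness of the resulting recursive system.
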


\begin{proof}
  One can observe that the abstract TeSSLa operators are monotone
  and future independent (the output stream up to $t$ only depends on
  the input streams up to $t$.)
  As shown in~\cite{tessla}, \tessla can express every such function. \qed
\end{proof}

\subsection{Fixpoint Calculations Ensuring Well-Formedness}
\label{sec:well-formedness}

A concrete TeSSLa specification consists of stream variables and
possibly recursive equations applying concrete \tessla operators to
the stream variables.
Theorem~\ref{thm:abstract-tessla} and Corollary~\ref{cor:abstract-tessla}
guarantee that a concrete
\tessla specification can be transformed into an abstract \tessla
specification, which is able to handle gaps in the input streams.
Additionally, Theorem~\ref{thm:realizability} states that the
abstract \tessla operators can be implemented using concrete
\tessla operators.
Combining these two results, one can transform a given concrete
specification $\phi$ into a corresponding specification $\psi$, which
realizes the abstract \tessla semantics of the operators in $\phi$,
but only uses concrete \tessla operators.

However, using the realization of the abstract \tessla operators in TeSSLa adds
additional cyclic dependencies in $\psi$ between the stream variables.
A TeSSLa specification is well-formed if every cycle of its dependency
graph contains at least one edge guarded by a last (or a delay)
operator, which is required to guarantee the existence of a unique
fixed-point and hence computability (see \cite{tessla}).

\begin{wrapfigure}[3]{r}{35mm}
\vskip-8mm\raggedleft
\begin{streampicture}[yscale=.6]
  \begin{stream}{v}
    \definedto{4}
    \eventat{2}
  \end{stream}
  \begin{stream}{r}
    \definedto{1}
    \undefinedto{3}
    \definedto{4}
  \end{stream}
  \begin{stream}{\LASTabs(v,r)}
    \definedto{2}
    \undefinedto{3}
    \definedto{4}
  \end{stream}
  \draw[red, semithick, line cap=round, rounded corners=1.5pt,-{Latex[round, length=4pt]}, shorten >=1.5pt, shorten <=1.5pt] (2,-6) -- ++(0,-.8) -- ++(.5,0) -- ++(0,6.3) -- ++(-.5,0) -- ++(0,-1.5);
\end{streampicture}
\end{wrapfigure}

Consider the trace diagram on the right showing
$\LASTabs(v,r)$.  
If $v$ is used in a recursive manner, i.e., $v$ is defined in terms of
$\LASTabs(v,r)$, then the first event on $v$ could start a gap on
$\LASTabs(v,r)$ that could start a gap on $v$ at the same timestamp.
As a result $v$ has an unguarded cyclic dependency and hence the
specification is not well-formed.  
To overcome this issue one can split up the value and gap calculation
sequentially, reintroducing guards in the cyclic dependency:
\begin{definition}[Unrolled Abstract Last] \label{def:unrolling}
  We define two variants of the abstract last, $\LASTbot$ and
  $\LASTgap$ as follows.
  Let $z = \LASTabs(v,r)$, then $\LASTbot(v,r) := z_\bot$ and
  $\LASTgap(v,r,d) := z_{\gap}$.
  \[z_\bot(t) = 
  \begin{cases}
    z(t) & \text{if } z(t) \neq \gap \\
    \bot & \text{otherwise}
  \end{cases}
  \qquad
  z_{\gap}(t) = \begin{cases}
    d(t) & \text{if } t \in \Ticks(d) \\
    \gap & \text{if } t \notin \Ticks(d) \land z(t) = \gap \\
    \bot & \text{otherwise}
  \end{cases}\]
\end{definition}
Function \LASTbot executes a normal calculation of the events, in
the same way an abstract last would do, but neglecting gaps and
outputting $\bot$ as long as there is no event. 
Function \LASTgap takes a third input stream and outputs its events
directly, but calculates gaps correctly as $\LASTabs$ would do.

Since the trigger input of a \LAST operator cannot be recursive in a
well-formed specification, a recursive equation using one last has the
form $x = \LASTabs(v,r)$ and $v = f(x,\vec c)$, where $\vec c$ is a
vector of streams not involved in the recursion and $f$ does not
introduce further last (or delay) operators.
Now, this equation system can be rewritten in the following equivalent form:
\[
  x' = \LASTbot(v,r)
  \qquad
  v' = f(x',\vec c)
  \qquad
  x = \LASTgap(v',r,x')
  \qquad
  v = f(x,\vec c)
\]
This pattern can be repeated if multiple recursive abstract lasts
are used and can also be applied in a similar fashion to mutually
recursive equations and the delay operator.


\section{Perfection of Compositional Specifications}
\label{sec:compositional}

A concrete \tessla specification $\phi$ can be transformed into an
abstract \tessla specification $\psi$ by replacing the concrete
operators with their abstract counterparts.
For two functions $f$ and $g$ with corresponding abstractions $f^\#$ and $g^\#$
the function composition $f^\#\circ g^\#$ is an abstraction of $f\circ g$.
Unfortunately, even if $f^\#$ and $g^\#$ are perfect abstractions,
$f^\#\circ g^\#$ is not necessarily a perfect abstraction.
Hence, $\psi$ needs not be a perfect abstraction of $\phi$.
In this section we discuss the perfection of two common compositional
\tessla operators:
(1) the $\DEF{slift}^\#$ defined in Section~\ref{sec:abstraction} is a
composition of \LASTabs in $\LIFTabs$\hspace{-0.25em}, which realizes signal
semantics;
(2) $\LASTabs(\TIMEabs(v), r)$, which is a common pattern used when
comparing timestamps.

The $\DEF{slift}^\#$ is defined as the $\LIFTabs$ applied to
the synchronized versions $x'$ and $y'$ of the input streams $x$ and $y$.
The input stream $x$ is synchronized with $y$ by keeping the
original events of $x$ and reproducing the last known value of
$x$ for every timestamp with an event on $y$, but not on $x$.

\begin{theorem} 
  If $f^\#$ is a perfect abstraction of $f$ then
  $\DEF{slift}(f^\#)^\#$ is a perfect abstraction of $\DEF{slift}(f)$.
\end{theorem}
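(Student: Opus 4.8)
The plan is to prove that $\DEF{slift}(f^\#)^\#$ is a perfect abstraction of $\DEF{slift}(f)$, which by the definition of perfect abstraction (via the Galois Connection $(\alpha,\gamma)$) means establishing $\DEF{slift}(f^\#)^\#(s) = \alpha(\DEF{slift}(f)(\gamma(s)))$ for every abstract input. Recall that $\DEF{slift}$ is not a primitive but a composition: $\DEF{slift}(f)(x,y) = \LIFT(g_f)(x',y')$ where $x' = \DEF{merge}(x,\LAST(x,y))$ and $y' = \DEF{merge}(y,\LAST(y,x))$, and similarly on the abstract side. Since we already know from Theorem~\ref{theorem:abstraction} that each individual operator ($\LIFTabs$, $\LASTabs$, $\DEF{merge}^\#$) is a perfect abstraction of its concrete counterpart, the real content here is showing that this particular composition does \emph{not} lose precision, even though general compositions of perfect abstractions can. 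So the proof is not about the operators in isolation but about why the signal-synchronization pattern is special.

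First I would make the perfection criterion concrete. By the Galois Connection, $\DEF{slift}(f^\#)^\#(x,y) \preceq \alpha(\DEF{slift}(f)(\gamma(x,y)))$ holds automatically from soundness (Corollary~\ref{cor:abstract-tessla}); the work is the reverse inequality $\alpha(\DEF{slift}(f)(\gamma(x,y))) \preceq \DEF{slift}(f^\#)^\#(x,y)$, i.e.\ the abstract computation is no coarser than the best abstraction of the concrete set. Equivalently, I would argue pointwise over $\mathbb T$: for each timestamp $t$, the abstract output value $\DEF{slift}(f^\#)^\#(x,y)(t)$ is exactly the supremum (join) over the concrete outputs $\DEF{slift}(f)(s_1,s_2)(t)$ ranging over all $(s_1,s_2) \in \gamma(x,y)$. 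The decisive observation is that the synchronization step is \emph{information-preserving}: the merged streams $x'$ and $y'$ introduce events (via $\LASTabs$) only at timestamps that already carry events in $x$ or $y$, and at each such point the abstract value of $x'$ and $y'$ is determined by the abstract value already present in $x$ and $y$ together with the ``last'' lookup. Crucially, the set of timestamps where $x'$ and $y'$ have events is a fixed function of $\Ticks(x) \cup \Ticks(y)$ and the gap structure, and this set does not depend on the concrete instantiation $(s_1,s_2)$. This is what prevents the precision loss that plagues general compositions.

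The key steps, in order, would be: (i) establish that synchronization commutes with concretization in the precise sense that $\gamma(x') = \{ \DEF{merge}(s_1,\LAST(s_1,s_2)) \mid (s_1,s_2) \in \gamma(x,y)\}$, and symmetrically for $y'$, so that the abstract synchronized pair $(x',y')$ has concretization exactly the set of concrete synchronized pairs; (ii) verify that $\alpha$ applied to this set of concrete pairs returns precisely the abstract pair computed by $\DEF{merge}^\#$ and $\LASTabs$ --- this is where perfection of the component operators (Theorem~\ref{theorem:abstraction}) is invoked, but applied jointly to the pair rather than independently; and (iii) compose with the already-known perfection of $\LIFTabs(g_f^\#)$ under the hypothesis that $f^\#$ is perfect, using that $\LIFT$ is evaluated pointwise so the join over the concrete set pushes through $g_f$ coordinatewise. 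The gray edge cases in the definition of $g_f$ (the $\gap$ and $\bot$ propagation rules) must be checked to agree with the join of the corresponding concrete $g_f$ values.

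The main obstacle I anticipate is step (i)/(ii) --- specifically the handling of gaps under synchronization. When $x$ has an event at $t$ but $y$ is in a gap at $t$, the concrete traces in $\gamma(x,y)$ may or may not have a $y$-event at or before $t$, so the ``last known value of $y$'' varies across the concrete set; one must confirm that the abstract value produced by $\DEF{merge}^\#(y,\LASTabs(y,x))$ at $t$ (which will be a $\gap$ or $\top$) is exactly the join over these concrete possibilities, and in particular that no concrete synchronized pair is excluded or spuriously added. The asymmetry in $\DEF{merge}^\#$ (an event on the first stream beats a gap on the second, but not conversely) and the three distinct $\gap$/$\top$ output cases of $\LASTabs$ mean this requires a careful case split on the relative gap structure of $x$ and $y$ around $t$. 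I expect the cleanest route is to fix an arbitrary $t$, enumerate the combinations of ($x(t)$, $y(t)$, and the most recent non-$\bot$ entries of each before $t$) as abstract values in $\{\mathbb D^\#, \bot, \gap\}$, and for each combination check that the abstract $\DEF{slift}^\#$ output equals the join of concrete $\DEF{slift}$ outputs over $\gamma(x,y)$ --- a finite, if tedious, verification that the synchronization never merges distinguishable concrete behaviours into a coarser abstract value than necessary.
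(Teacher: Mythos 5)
Your plan follows essentially the same route as the paper's (quite terse) proof: both rest on the observation that the event/gap positions of the synchronized streams $x'$ and $y'$ are determined by the event/gap pattern of the abstract inputs alone, after which perfection of $f^\#$ yields values as precise as the abstract domain permits; your version is simply far more explicit about the correlation between the concrete instantiations of $x'$ and $y'$ and about the gap case analysis, both of which the paper glosses over. The only blemish is that your step (i) asserts that $\gamma(x')$ \emph{equals} the set of concrete synchronizations, which in general holds only as $\supseteq$; since step (ii) states the $\alpha$-direction that is actually needed for perfection, this does not damage the argument.
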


\begin{proof}
  Since $\DEF{slift}^\#$ is defined on abstract event streams we need
  to consider gaps.
  The stream $x'$ does not have any gap or event until the first gap
  or event on $x$. 
  After the first gap or event on $x$ the synchronized stream $x'$
  contains a gap or event at every timestamp where $x$ or $y$ contain
  a gap or event. 
  Because $\DEF{slift}^\#$ is symmetric in terms of the event pattern
  the same holds for $y'$.
  By definition, $\DEF{slift}^\#(f^\#)(x,y) = z$ contains an event or
  gap iff $x'$ and $y'$ contain an event or gap, because $f$ is
  a total function.
  The output stream $z$ contains an event iff $x'$ and $y'$ contain
  events. 
  The events values are ensured to be as precise as possible, because
  $f^\#$ is a perfect abstraction of $f$.
  \qed
\end{proof}

\begin{wrapfigure}[6]{r}{57mm}
\vskip-8mm\raggedleft
\begin{streampicture}
  \draw[very thin, draw=gray] (1,-2) -- (1,-8.5) node[below, every label, red, inner sep=0pt] {\strut$a$};
  \draw[very thin, draw=gray] (4,-2) -- (4,-8.5) node[below, every label, red, inner sep=0pt] {\strut$b$};
  \begin{stream}{v}
    \definedto{2}
    \eventat{1}
    \undefinedto{4}
    \definedto{6}
  \end{stream}
  \begin{stream}{r}
    \definedto{6}
    \eventat{5}
  \end{stream}
  \begin{stream}{\LASTabs(\TIMEabs(v), r)}
    \definedto{6}
    \topat{5}
  \end{stream}
  \begin{stream}{\LASTTIMEabs(v, r)}
    \definedto{6}
    \eventat{5}
    \node[every label] at (5,-7.2) {$[\color{red}a\color{black},\color{red}b\color{black}]$};
  \end{stream}
\end{streampicture}
\end{wrapfigure}

TeSSLa allows arbitrary computations on the timestamps of events using
the \TIME operator.  
The specification $z = \TIME(v)$ derives a stream $z$ from $v$ by
replacing all event's values in $v$ with the event's timestamps.
The stream variable $z$ can now be used in any computation expressible
in \tessla. 
Hence, \tessla does not distinguish between timestamps and other
values, and consequently abstract \tessla specifications cannot make
use of the monotonicity of time.
As an example consider the trace diagram on the right. 
The stream $\LASTabs(\TIMEabs(v), r)$ is derived from $v$ by composing
\TIMEabs and \LASTabs. 
Since \TIMEabs changes the events values with their timestamps, the
\LASTabs does not know any longer that we are interested in the last
timestamp of $v$ and can only produce an event with the value $\top$
representing all possible values.
To overcome this issue we define $\LASTTIME(v, r) := \LAST(\TIME(v),
r)$ and provide a direct abstraction, which allows a special treatment
of timestamps.
\begin{definition}[Time Aware Abstract Last]
  Let $y = \LASTabs(\TIMEabs(v),r)$, we define $\LASTTIMEabs: \mathcal
  P_{\mathbb D} \times \mathcal P_{\mathbb D'} \to \mathcal
  P_{2^{\mathbb T}}, \LASTTIMEabs(v, r) := z$ as
  $z(t) = [a, b]$ if $y(t) = \top$ with
  $a = \op{inf}\{t' < t \mid \forall_{t' < t'' < t} v(t'') \neq \gap\}$ and $b = \op{max}\{t' < t \mid t' \in \Ticks(v)\}$ and $z(t) = y(t)$ otherwise.
\end{definition}
Now the following result holds
(the proof can be found in Appendix \ref{app:proof}).
\begin{theorem}
  \label{theorem:timelast}
  $\LASTTIMEabs$ is a perfect abstraction of $\LASTTIME$.
\end{theorem}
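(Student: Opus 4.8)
The plan is to establish the defining equation of a perfect abstraction directly, namely $\LASTTIMEabs(v,r) = \alpha(\LASTTIME(\gamma(v,r)))$ for every pair of abstract input streams $(v,r)$, working pointwise in $t$ and using the Galois Connection $(\alpha,\gamma)$ underlying the output domain $\mathcal{P}_{2^{\mathbb T}}$, whose data abstraction represents an uncertain timestamp by an interval of $\mathbb T$ (so that $\alpha$ sends a set of timestamps to its interval hull). First I would split the analysis according to the value $y(t) = \LASTabs(\TIMEabs(v),r)(t)$. Whenever $y(t) \neq \top$, the definition gives $z(t) = y(t)$, and I claim that the composition $\LASTabs \circ \TIMEabs$ is already perfect at $t$: the event/gap/$\bot$/$\mathrm{?}$ skeleton of the output depends only on the tick–gap structure of the inputs, which $\TIMEabs$ preserves verbatim and which $\LASTabs$ treats perfectly by Theorem~\ref{theorem:abstraction}; and in the one remaining subcase, where $y(t)$ is a definite timestamp $t'$, the segment of $v$ immediately preceding $t$ is gapless, so every concretization $v' \in \gamma(v)$ has its last event before $t$ at exactly $t'$, whence $\alpha(\LASTTIME(\gamma(v,r)))(t) = \{t'\} = y(t)$. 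This reduces the theorem to the single case $y(t)=\top$.

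The heart of the argument is therefore case~(1) of $\LASTabs$, which is precisely where composing $\TIMEabs$ with $\LASTabs$ discards information by collapsing an unknown timestamp to $\top$. Here I would identify the geometry behind the two endpoints in the definition: $b=\max\{t'<t\mid t'\in\Ticks(v)\}$ is the timestamp of the last definite event of $v$ before $t$ (well defined because case~(1), unlike case~(3), guarantees such an event exists), and $a=\inf\{t'<t\mid \forall_{t'<t''<t}v(t'')\neq\gap\}$ is the right boundary of the last gap preceding $t$, so the segment $(a,t)$ is gapless and, since $y(t)=\top$ rather than a definite value, contains no event. I would then characterise the achievable set $\{\LASTTIME(v',r')(t)\mid (v',r')\in\gamma(v,r)\}$: a concretization $v'$ reproduces every definite event of $v$ and may place arbitrary events inside the gaps, so the last event of $v'$ before $t$ is either the last definite event (at $b$, realised when the final gap happens to be empty) or an event lying inside a gap, whose timestamp exceeds $b$ and is at most $a$. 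This shows the achievable set is contained in the interval spanned by $a$ and $b$, which gives soundness.

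For optimality I would exhibit witnesses: the concretization that leaves all gaps empty attains $b$, while concretizations with a single event placed arbitrarily late inside the last gap attain timestamps approaching $a$; hence the interval hull $\alpha$ of the achievable set is exactly the interval $[a,b]$ produced by $\LASTTIMEabs$, matching $z(t)$. Combining the two cases pointwise, together with the fact that the gap/event/$\bot$ skeleton of $z$ is inherited unchanged from $y$, yields $\LASTTIMEabs(v,r)=\alpha(\LASTTIME(\gamma(v,r)))$, i.e. a perfect abstraction. I expect the main obstacle to be the bookkeeping in this last case: pinning down the inclusive/exclusive boundary behaviour at $a$ (whether the supremum $a$ is attained, which depends on how the gap's right end is represented) and confirming that no achievable timestamp escapes $[a,b]$ even when several earlier gaps are present. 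The conceptual point that makes the whole case succeed—and that separates $\LASTTIMEabs$ from the naive composition—is the monotonicity of time: because $\TIME$ stamps each event with its own increasing timestamp, the unknown last value is confined to the window between $a$ and $b$ rather than ranging over all of $\mathbb T$.
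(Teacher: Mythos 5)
Your proposal is correct and follows essentially the same route as the paper's proof: reduce to the single problematic case $y(t)=\top$ by invoking the perfection of $\LASTabs$ and $\TIMEabs$ individually (Theorem~\ref{theorem:abstraction}), then show that the interval with endpoints $a$ and $b$ coincides with the abstraction of the set of last-timestamps achievable by placing arbitrary events in the gaps of concretizations of $v$. Your write-up is in fact somewhat more explicit than the paper's, separating soundness (containment of the achievable set in the interval) from optimality (witness concretizations attaining each endpoint), but the underlying argument is the same.
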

A similar problem occurs if $\DEF{slift}^\#$ is used to compare
event's timestamps.
In Example~\ref{abstract-running-example} the stream $\DEF{cond}$
derived by comparing the timestamps of $\DEF{values}$ and $\DEF{resets}$
has two events with the unknown data value $\top$ because of prior
gaps on $\DEF{values}$.
Since the $\DEF{slift}^\#$ is defined in terms of $\LIFTabs$ and $\LASTabs$
we can define the function $\DEF{sliftTime}^\#(f^\#)(x,y)$ as an abstraction
for the special case
$\DEF{sliftTime}(f)(x,y) = \DEF{slift}(f)(\TIME(x), \TIME(y))$
by using
$\DEF{lastTime}^{\#}$ instead of $\DEF{last}^\#$ and ensuring that $f^\#$
uses interval arithmetics to abstract $f$.
Note that $\DEF{sliftTime}^\#(f^\#)$ is a perfect abstraction of
$\DEF{sliftTime}(f)$.

\begin{wrapfigure}[6]{r}{58mm}
\vskip-8mm\raggedleft
  \begin{streampicture}[yscale=1.4,xscale=.68]
    \begin{stream}{\DEF{values}}
      \definedto{8}
      \eventat[val=3]{1}
      \eventat[val=2]{3}
      \eventat[val=4]{4}
      \gapat{5}
      \eventat[val=5]{7}
      \undefinedto{10}
      \definedto{14}
      \eventat[val=3]{11}
      \eventat[val=6]{12}
      \eventat[val=1]{13}
    \end{stream}
    \streamskip[-1]
    \begin{stream}{\DEF{resets}}
      \definedto{14}
      \eventat{2}
      \eventat{6}
      \eventat{9}
      \eventat{12}
    \end{stream}
    \begin{stream}{\DEF{cond}}
      \definedto{8}
      \eventat[val=tt]{2}
      \eventat[val=ff]{3}
      \eventat[val=ff]{4}
      \gapat{5}
      \eventat[red,val=tt]{6}
      \eventat[val=ff]{7}
      \undefinedto{10}
      \eventat[red,val=tt]{9}
      \definedto{14}
      \eventat[val=ff]{11}
      \eventat[val=tt]{12}
      \eventat[val=ff]{13}
    \end{stream}
    \begin{stream}{\DEF{sum}}
      \definedto{8}
      \eventat[val=0]{2}
      \eventat[val=2]{3}
      \eventat[val=6]{4}
      \gapat{5}
      \eventat[red,val=0]{6}
      \eventat[red,val=5]{7}
      \undefinedto{10}
      \eventat[red,val=0]{9}
      \definedto{14}
      \topat{11}
      \eventat[val=0]{12}
      \eventat[val=1]{13}
    \end{stream}
  \end{streampicture}
\end{wrapfigure}

\refstepcounter{example}
\paragraph{Example \theexample.}
\label{example:time-abstract-tessla}
To illustrate the perfect abstraction
$\DEF{sliftTime}^\#$ we update the definition of \DEF{cond} in
Example~\ref{abstract-running-example} as follows: $\DEF{cond} =
\DEF{sliftTime}(\le)(\DEF{resets}, \DEF{values})$.  The events drawn
in red now have concrete values instead of $\top$ as in
Example~\ref{abstract-running-example}.



\section{Abstractions for Sliding Windows}
\label{sec:queue}

In this section we demonstrate how to apply the techniques
presented in this paper to specifications
with richer data domains.
In particular, we show now a TeSSLa specification that uses a queue to
compute the average load of a processor in the last five time units.
The moving window is realized using a queue storing all events that happened in the time window.
The stream $\DEF{load}\in \mathcal{S}_\mathbb{R}$ contains an event
every time the input load changes:
\begin{align*}
  \DEF{stripped} &= \DEF{slift}(\AUX{remOlder}_5)(\TIME(\DEF{load}), \DEF{merge}(\LAST(\DEF{queue}, \DEF{load}), \langle\rangle)))\\
  \DEF{queue} &= \LIFT(\AUX{enq})(\TIME(\DEF{load}), \DEF{load}, \DEF{stripped})\\
  \DEF{avg} &= \LIFT(\AUX{int})(\DEF{queue}, \TIME(\DEF{load}))\\
  \AUX{int}(q, u) &= \AUX{fold}(f, q, 0, u) \qquad
  f(a, b, v, \AUX{acc}) = \AUX{acc} + v \cdot (b-a)/5
\end{align*}
The queue operation \AUX{enq} adds elements to the queue, while
$\AUX{remOlder}_5$ removes elements with a timestamp older than five
time units. 
The function \AUX{int} accumulates all values in the queue weighted
by the length of the corresponding signal piece.
The queue operation \AUX{fold} is used to fold the function $f$
over all elements from the queue with the initial accumulator 0
until the timestamp $u$.
Hence $f$ is called for every element in the queue with the timestamps $a$
and $b$, the element's value $v$ and the accumulator.
Consequently, the specification adds elements to the queue, removes
the expired elements and accumulates the remaining values.
Using our approach we replace every operator with its abstract
counterpart and represent abstract queues appropriately such that also
queues with partly unknown entries can be modeled.  By doing this we
obtain a specification that is able to handle gaps in the input
stream, as illustrated in Fig.~\ref{fig:queue}.

\begin{figure}[t]
  \let\mathmapsto\mapsto
  \renewcommand{\mapsto}{\text{\hspace{.1em}$\mathmapsto$\hspace{.1em}}}
  \centering
  \begin{streampicture}[xscale=1.2]
    \begin{stream}[time stamps]{\DEF{load}}
      \fill[yshift=-\thestreamycoordinate cm, green!50!black!10]
        (1,0) -- (1,1.6) -- (3,1.6) -- (3,2) -- (7,2) -- (7,2.4) -- (9,2.4) -- (9,0);
      \draw[yshift=-\thestreamycoordinate cm, green!50!black]
        (1,1.6) -- (3,1.6) -- (3,2) -- (7,2) -- (7,2.4) -- (9,2.4);
      \fill[yshift=-\thestreamycoordinate cm, red!10]
        (9,0) -- (9,3) -- (10,3) -- (10,0) -- cycle;
      \draw[yshift=-\thestreamycoordinate cm, red, dotted]
        (9,2.4) -- (9,3) -- (10,3) -- (10,1.8);
      \fill[yshift=-\thestreamycoordinate cm, green!50!black!10]
        (10,0) -- (10,1.8) -- (13,1.8) -- (13,1.6) -- (16,1.6) -- (16,2.6) -- (17,2.6) -- (17,0);
      \draw[yshift=-\thestreamycoordinate cm, green!50!black]
        (10,1.8) -- (13,1.8) -- (13,1.6) -- (16,1.6) -- (16,2.6) -- (17,2.6);
      \definedto{17}
      \eventat[val=.3]{1}
      \eventat[val=.5]{3}
      \eventat[val=.7]{7}
      \gapat{9}
      \eventat[val=.4]{10}
      \eventat[val=.3]{13}
      \eventat[val=.8]{16}
    \end{stream}
      \path[blue,|-|,yshift=-\thestreamycoordinate cm,font=\scriptsize\sffamily]
        (-.7,-2) edge[dotted,-] (0,-2)
        (0,-2) edge[-|] node[above,pos=.6] {5} (3,-2)
        (2,-2.7) edge node[below] {5} (7,-2.7)
        (5,-2) edge node[above,pos=.6] {5} (10,-2)
        (8,-2.7) edge node[below] {5} (13,-2.7)
        (11,-2) edge node[above,near end] {5} (16,-2);
    \streamskip[6]
    \newcommand{\queueeventat}[2]{
      \eventat[val={\shortstack[r]{#2}}]{#1}
    }
    \begin{stream}[every label/.append style={inner sep=1pt}]{\DEF{queue}}
      \definedto{17}
      \queueeventat{1}{$\strut1\mapsto.3$}
      \queueeventat{3}{$3\mapsto.5$\\\strut$[1,3)\mapsto.3$}
      \queueeventat{7}{$7\mapsto.7$\\$[3,7)\mapsto.5$\\\strut$[2,3)\mapsto.3$}
      \gapat{9}
      \queueeventat{10}{$10\mapsto.4$\\\strut\color{red}$<$\,10:\,$\top$}
      \queueeventat{13}{$13\mapsto.3$\\$[10,13)\mapsto.4$\\\strut\color{red}$<$\,10:\,$\top$}
      \queueeventat{16}{$16\mapsto.8$\\$[13,16)\mapsto.3$\\\strut$[11,13)\mapsto.4$}
    \end{stream}
    \streamskip
    \begin{stream}[every label/.append style={inner sep=1pt}]{\DEF{avg}}
      \definedto{17}
      \eventat[val=\strut0]{1}
      \eventat[val=\strut.12]{3}
      \eventat[val=\strut.46]{7}
      \gapat{9}
      \eventat[val=\strut{$[0,1]$}]{10}
      \eventat[val=\strut{$[.24,.64]$}]{13}
      \eventat[val=\strut.34]{16}
    \end{stream}
  \end{streampicture}
  \caption{Example trace of the abstract queue specification.}
  \label{fig:queue}
\end{figure}

We can extend the example such that the queue only holds a
predefined maximum number of events (to guarantee a finite state
implementation).
When removing events we represent these as unknown entries in the abstract queues.
The abstract $\AUX{fold}^\#$ is capable of
computing the interval of possible average loads
for queues with unknown elements anyhow.

Note that the average load is only updated for every event on the input stream.
Using a delay operator, we can set a
timeout whenever an element leaves the sliding window in the abstract
setting.
The element is removed from the queue at that timeout and the new value
of the queue is updated with the remaining elements.
%
%
Formal definitions of the queue functions as well as the complete
specifications are available
online\footnote{\url{http://tessla-a.isp.uni-luebeck.de/}}.


\section{Implementation and Empirical Evaluation}
\label{sec:evaluation}

As discussed in Section~\ref{sec:well-formedness} the abstract TeSSLa
operators can be implemented using only the existing concrete TeSSLa
operators.
We implemented the abstract TeSSLa operators as macros specified in
the TeSSLa language itself such that the existing TeSSLa engine
presented in \cite{tessla} can handle abstract TeSSLa specifications.
An abstract event stream $(s, \Delta) \in \mathcal P_{\mathbb D}$ can
be represented as two TeSSLa streams $s \in \mathcal S_{\mathbb D^\#}$
and $s_d \in \mathcal S_{X}$, where $X$ contains the following six
possible changes of $\Delta$: inclusive start, exclusive start,
inclusive end, exclusive end, point-wise gap and point-wise event in a
gap.
Using this encoding it is sufficient to look up the latest $s_d(t')$
with $t' \le t$ to decide whether $t \in \Delta$.
While this encoding already allows a decent implementation of abstract
TeSSLa we go one step further and assume a finite time domain with a
limited precision, e.g., 64 bit integers or floats.
Under this assumption there is always a known smallest relative
timestamp $\epsilon$.
Hence, we can use the encoding $s_d \in \mathcal S_{\mathbb B}$ where
an event $s_d(t) = \mathrm{true}$ encodes a start inclusive and
$s_d(t) = \mathrm{false}$ an end exclusive.
This encoding captures the most common cases and simplifies the
implementation of union and intersection on $\Delta$ enormously since
they can now be realized as $\DEF{slift}(\vee)$ and
$\DEF{slift}(\wedge)$, resp.
The other possible switches at timestamp $t$ can be represented as
follows: $s_d(t+\epsilon) = \mathrm{true}$ encodes an exclusive start,
$s_d(t+\epsilon) = \mathrm{false}$ encodes an inclusive end,
$s_d(t) = \mathrm{true}$ and $s_d(t+\epsilon) = \mathrm{false}$
encodes a point-wise event in a gap, and $s_d(t) = \mathrm{false}$ and
$s_d(t+\epsilon) = \mathrm{true}$ encodes a point-wise gap.
Using this encoding the abstract TeSSLa operators do not need to
handle these additional cases explicitly.

Furthermore, assuming the smallest relative timestamp $\epsilon$, we
can avoid the need to perform the unrolling defined in
Defs.~\ref{def:unrolling} by delaying the
second part of the computation to the next possible timestamp
$t+\epsilon$.

As a final efficiency improvement we simplified $\LASTabs$ before the
first event on the stream of values, which are not relevant in
practice.
The abstract operator and hence abstract specifications are of course
still a sound abstraction of their concrete counterparts, but due to
over-abstractions no longer a perfect one during this initial
event-less phase of the stream of values.

The implementation in form of a macro library for the existing TeSSLa
engine is available together with all the examples and scripts used in
the following empirical evaluation and can be experimented with in a
web IDE\footnote{\url{http://tessla-a.isp.uni-luebeck.de/}}.

In the following empirical evaluation we measure the accuracy of the
abstractions presented in this paper.
An abstract event stream represents input data with some sequences of
data loss, where we do not know if any events might have been occurred
or what their values have been.
Applying an abstract TeSSLa specification to such an input stream
takes these gaps into account and provides output streams that in turn
contain sequences of gaps and sequences containing concrete events.
To evaluate the accuracy of this procedure we compare the output of an
abstract TeSSLa specification with the best possible output.

\begin{wrapfigure}[7]{r}{40mm}
\vskip-8mm\raggedleft
\begin{tikzpicture}[
    on grid, auto,
    label/.style={
      inner sep=1pt
    },
    abstract/.style={
      draw=blue,
      fill=blue!10,
      draw,
      minimum height=6mm,
      minimum width=8mm,
      rounded corners=2pt
    },
    concrete/.style={
      draw=red,
      fill=red!10,
      draw,
      minimum height=6mm,
      minimum width=8mm,
      rounded corners=2pt
    }
  ]
  \node[abstract] (abstract input) {$\mathcal P_\mathbb D$};
  \node[abstract, right=1.5cm of abstract input] (abstract output) {$\mathcal P_\mathbb D$};
  \node[concrete, below=1.5cm of abstract input] (concrete input) {$2^{\mathcal S_\mathbb D}$};
  \node[abstract, below=1.2cm of abstract output] (concrete output) {$2^{\mathcal S_\mathbb D}$};
  \node[concrete, below=7mm of concrete output] (perfect concrete output) {$2^{\mathcal S_\mathbb D}$};
  \node[abstract, right=1.5cm of concrete output] (ignorance) {$\mathbb I$};
  \node[concrete, below=7mm of ignorance] (perfect ignorance) {$\mathbb I$};
  \path[->]
    (abstract input) edge node[label] {$\phi^\#$} (abstract output)
    (abstract input) edge node[label] {$\gamma$} (concrete input)
    (abstract output) edge node[label] {$\gamma$} (concrete output)
    (concrete input) edge node[label, near start] {$\phi$} (perfect concrete output)
    (perfect concrete output) edge node[label] {$\iota$} (perfect ignorance)
    (concrete output) edge node[label] {$\iota$} (ignorance);
\end{tikzpicture}
\end{wrapfigure}

Let $r \in \mathcal P_{\mathbb D}$ be an abstract event stream.
We obtain the set $R$ of all possible input streams containing all
possible variants that might have happened during gaps in $r$ by
applying the concretization function $\gamma$ on the abstract input
stream.
Now we can apply the concrete TeSSLa specification $\phi$ to all
streams in $R$ and get the set $S$ of concrete output streams.
On the other hand we apply the abstract TeSSLa specification $\phi^\#$
directly to $r$ and get the abstract output stream $s$.
Now $S$ is the set of all possible output streams and $\gamma(s)$ is
the set of output streams defined by the abstract TeSSLa
specification.
The diagram on the right depicts this comparison process.

\begin{wrapfigure}[6]{r}{40mm}
\vskip-8mm\raggedleft
\begin{streampicture}[y=12mm]
  \begin{stream}{a}
    \definedto{6}
    \eventat[val=0]{0}
    \eventat[val=2]{1}
    \eventat[val=1]{2}
  \end{stream}
  \begin{stream}{b}
    \definedto{6}
    \eventat[val=0]{0}
    \eventat[val=2]{1}
    \eventat[val=0]{3}
    \eventat[val=1]{4}
  \end{stream}
  \begin{stream}{c}
    \definedto{6}
    \eventat[val=0]{0}
    \eventat[val=2]{1}
    \eventat[val=1]{5}
  \end{stream}
  \filldraw (0,-8) node[left, anchor=east, every label] {ignor.}
    --++(2,0) --++(0,.7) --++(1,0) --++(0,.7) --++(1,0) --++(0,-.7) --++(1,0) --++(0,-.7) --++(1,0) -- cycle;
\end{streampicture}
\end{wrapfigure}

To compare $\gamma(s)$ and $S$ in a quantitative way we define the
\emph{ignorance measure}
$\iota: 2^{{\mathcal S}_\mathbb D} \to \mathbb I = [0,1]$ scoring the
ambiguity of such a set of streams, i.e., how similar the different
streams in the set are.
Events in non-synchronized streams might not have corresponding events
at the same timestamp on the other streams.
Hence we refer to the signal semantics of event streams where the
events represent the changes of a piece-wise constant signal.
As depicted on the right with three event streams over the finite data
domain $\{0,1,2\}$, we score timestamps based on how many event
streams have the same value with respect to the signal semantics at
that timestamp.
These scores are then integrated and normalized throughout the length
of the streams.
See Appendix~\ref{app:ignorance-measure} for the technical details.
Using this ignorance measure we can now compute the optimal ignorance
$i := \iota(S)\in \mathbb I$ and the ignorance
$k := \iota(\gamma(s))\in \mathbb I$ of the streams produced by the
abstract TeSSLa specification.
%

For the evaluation we took several example specifications and
corresponding input traces representing different use-cases of TeSSLa
and compared the optimal ignorance with the ignorance of abstract
TeSSLa.
Note that computing the optimal ignorance requires to derive all
possible variants of events that might have happened during gaps,
which are in general infinitely many and in the special case of only
point-wise gaps still exponentially many.
Hence this can only be done on rather short traces with only a few
point-wise gaps.
As a measure for the overhead imposed by using the abstraction
compared to the concrete TeSSLa specification we use the computation
depth, i.e., the depth of the dependency graph of the computation
nodes of the specifications.
While runtimes are highly depending on implementation details of the
used TeSSLa engines, the computation depth is a good indicator for the
computational overhead in terms of how many concrete TeSSLa operators
are needed to realize the abstract TeSSLa specification.
See Fig.~\ref{fig:empirical} for a graphical representation
and Appendix~\ref{sec:empirical-data} for numerical results.

\begin{figure}[t]
  \centering
  \begin{tikzpicture}[y=.6mm,x=8mm,every node/.style={font=\scriptsize}]
    \foreach \y in {0,5,...,40} {
      \draw[very thin, gray] (-.8,\y) -- (7.5,\y);
    }
    \foreach \x/\name/\o/\i/\k in {
      0/reset-count/16.0/24.6/24.6,
      1/reset-sum/21.3/7.6/7.6,
      2/filter-example/17.8/20/20,
      3/variable-period/34.3/17.5/20,
      4/bursts/19.0/15.2/33.6,
      5/queue/18.9/9.5/22.6,
      6/finite-queue/16.5/36.5/36.5,
      7/self-updating-queue/16.6/12.1/30.1
    } {
      \node[below,anchor=east,rotate=20] at (\x,-2pt) {\name};
      \draw[line width=4pt, black!70] (\x,0) ++ (-7pt,0) -- ++(0,\o);
      \draw[line width=6pt, red!70] (\x,0) ++ (0,0) -- ++(0,\k);
      \draw[line width=4pt, blue!70] (\x,0) ++ (1pt,0) -- ++(0,\i);
      \draw (-.8,0) -- (7.5,0);
      \draw[->] (-.8,0) -- ++(0,45) node [above] {overhead};
      \foreach \o in {0,10,20,30,40} {
        \draw (-.8,\o) -- ++ (-2pt,0) node[left] {\o};
      }
      \foreach \i in {0,0.1,0.2,0.3,0.4} {
        \draw (7.5,\i*100) -- ++ (2pt,0) node[right] {\i};
      }
      \draw[->] (7.5,0) -- ++(0,45) node [above] {ignorance};
    }
    \draw[line width=4pt, draw=black!70] (9,45) ++ (0,-10pt) -- ++(4pt,0)
      node [right, text width=3cm, align=left] {computation\\ depth overhead};
    \draw[line width=4pt, draw=blue!70] (9,45) ++ (0,-30pt) -- ++(4pt,0)
      node [right, text width=3cm, align=left] {optimal\\ ignorance};
    \draw[line width=4pt, draw=red!70] (9,45) ++ (0,-50pt) -- ++(4pt,0)
      node [right, text width=3cm, align=left] {ignorance of\\ abstract TeSSLa};
  \end{tikzpicture}
  \vskip-1.6cm
  \hfill
  \parbox{3.5cm}{\caption{Empirical results.}}
  \label{fig:empirical}
  \vskip9mm
\end{figure}
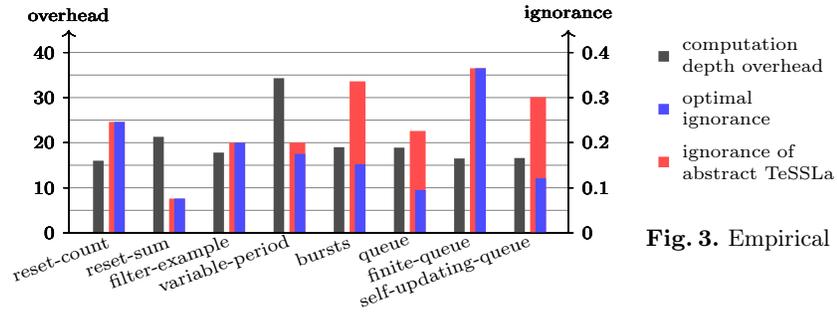

The first three examples represent the class of common, simple TeSSLa
specifications without complex interdependencies and no generation of
additional events with \DELAY: Reset-count counts between reset
events; reset-sum sums up events between reset events; and
filter-example filters events occurring in a certain timing-pattern.
For these common specifications the overhead is small and the
abstraction is perfectly accurate.
The burst example checks if events appear according to a complex
pattern.
In the abstraction we loose accuracy because the starting point of a
burst is not accessible by \LASTabs after a gap.
A similar problem occurs in the queue example where we use a complex
data domain to develop a queue along an event stream.
If \LASTabs produces $\top$ after a gap all information about the
queue before the gap is lost.
For variable-period the abstraction is not perfectly accurate, because
the \DELAY is used to generate events periodically depending on an
external input.
This gets even worse for the self-updating queue where complex
computations are performed depending on events generated by a \DELAY.
Surprisingly, the finite-queue is again perfectly accurate, because
the size of the queue is limited in a way that eliminates the
inaccuracy of the abstraction in this particular example.



\section{Conclusion}
\label{sec:conclusion}
By replacing the basic operators of TeSSLa with abstract counterparts, we obtained a framework where properties and analyses can be specified with respect to complete traces and automatically evaluated for partially known traces.
We have shown that these abstract operators can be encoded in TeSSLa, allowing existing evaluation engines to be reused.
This is particularly useful as TeSSLa comprises a very small core language suitable for implementation in soft- as well as hardware.
Using the example of sliding windows, we demonstrated how complex data structures like queues can be abstracted.
Using finite abstractions, our approach even facilitates using complex data structures when only limited memory is available.
Evaluating the abstract specification typically only increases the computational cost by a constant factor.
In particular, if a concrete specification can be monitored in linear time (in the size of the trace) its abstract counterpart can be as well.
Finally, we illustrated the practical feasibility of our approach by an empirical evaluation using the freely available
\tessla engine.

\bibliographystyle{splncs04}
\bibliography{references}

\begin{thebibliography}{10}
\providecommand{\url}[1]{\texttt{#1}}
\providecommand{\urlprefix}{URL }
\providecommand{\doi}[1]{https://doi.org/#1}

\bibitem{QRE}
Alur, R., Fisman, D., Raghothaman, M.: Regular programming for quantitative
  properties of data streams. In: ESOP. pp. 15--40. Springer (2016)

\bibitem{TRE}
Asarin, E., Caspi, P., Maler, O.: Timed regular expressions. J. {ACM}
  \textbf{49}(2),  172--206 (2002)

\bibitem{DBLP:conf/fsttcs/BasinKZ15}
Basin, D.A., Klaedtke, F., Zalinescu, E.: Failure-aware runtime verification of
  distributed systems. In: {FSTTCS}. LIPIcs, vol.~45, pp. 590--603. Schloss
  Dagstuhl - Leibniz-Zentrum fuer Informatik (2015)

\bibitem{bauer11runtime}
Bauer, A., Leucker, M., Schallhart, C.: Runtime verification for {LTL} and
  {TLTL}. ACM T. Softw. Eng. Meth.  \textbf{20}(4), ~14 (2011)

\bibitem{berry00foundations}
Berry, G.: Proof, language, and interaction: essays in honour of {R}obin
  {M}ilner, chap. The foundations of {E}sterel, pp. 425--454. MIT Press (2000)

\bibitem{bozzelli14foundations}
Bozelli, L., S\'{a}nchez, C.: Foundations of {B}oolean stream runtime
  verification. In: RV. LNCS, vol.~8734, pp. 64--79. Springer (2014)

\bibitem{tessla}
Convent, L., Hungerecker, S., Leucker, M., Scheffel, T., Schmitz, M., Thoma,
  D.: Tessla: Temporal stream-based specification language. In: SBMF. LNCS,
  Springer (2018)

\bibitem{DBLP:conf/popl/CousotC77}
Cousot, P., Cousot, R.: Abstract interpretation: {A} unified lattice model for
  static analysis of programs by construction or approximation of fixpoints.
  In: {POPL}. pp. 238--252. {ACM} (1977)

\bibitem{cousot77abstract}
Cousot, P., Cousot, R.: Abstract interpretation: A unified lattice model for
  static analysis of programs by construction or approximation of fixpoints.
  In: POPL. pp. 238--252. ACM Press (1977)

\bibitem{cousot92abstract}
Cousot, P., Cousot, R.: Abstract interpretation frameworks. Journal of Logic
  and Computation  \textbf{2}(4),  511--547 (1992)

\bibitem{dangelo05lola}
D'Angelo, B., Sankaranarayanan, S., S{\'{a}}nchez, C., Robinson, W.,
  Finkbeiner, B., Sipma, H.B., Mehrotra, S., Manna, Z.: {LOLA:} runtime
  monitoring of synchronous systems. In: TIME. pp. 166--174. {IEEE} (2005)

\bibitem{eisner03reasoning}
Eisner, C., Fisman, D., Havlicek, J., Lustig, Y., McIsaac, A., Campenhout,
  D.V.: Reasoning with temporal logic on truncated paths. In: Proc. of CAV'03.
  pp. 27--39. LNCS 2725, Springer (2003)

\bibitem{eliot97functional}
Eliot, C., Hudak, P.: Functional reactive animation. In: Proc. of ICFP'07. pp.
  163--173. ACM (1997)

\bibitem{LOLA2}
Faymonville, P., Finkbeiner, B., Schirmer, S., Torfah, H.: A stream-based
  specification language for network monitoring. In: RV. pp. 152--168. LNCS,
  Springer (2016)

\bibitem{rtlola}
Faymonville, P., Finkbeiner, B., Schwenger, M., Torfah, H.: Real-time
  {Stream}-based {Monitoring}. arXiv:1711.03829  (Nov 2017),
  \url{http://arxiv.org/abs/1711.03829}

\bibitem{gautier87signal}
Gautier, T., {Le Guernic}, P., Besnard, L.: {\sc SIGNAL}: A declarative
  language for synchronous programming of real-time systems. In: Proc. of
  FPCA'87. pp. 257--277. LNCS 274, Springer (1987)

\bibitem{gorostiaga18striver}
Gorostiaga, F., S\'anchez, C.: {S}triver: Stream runtime verification for
  real-time event-streams. In: RV. pp. 282--298. LNCS, Springer (2018)

\bibitem{halbwachs87lustre}
Halbwachs, N., Caspi, P., Pilaud, D., Plaice, J.: Lustre: a declarative
  language for programming synchronous systems. In: Proc. of POPL'87. pp.
  178--188. ACM Press (1987)

\bibitem{havelund02synthesizing}
Havelund, K., Ro\c{s}u, G.: Synthesizing monitors for safety properties. In:
  TACAS. pp. 342--356. LNCS, Springer (2002)

\bibitem{STL}
Maler, O., Nickovic, D.: Monitoring temporal properties of continuous signals.
  In: {FTRTFT}. pp. 152--166 (2004)

\bibitem{manna95temporal}
Manna, Z., Pnueli, A.: Temporal Verification of Reactive Systems: {S}afety.
  Springer, New York (1995)

\bibitem{sen03generating}
Sen, K., Ro\c{s}u, G.: Generating optimal monitors for extended regular
  expressions. ENTCS  \textbf{89}(2),  226--245 (2003)

\bibitem{DBLP:conf/rv/StollerBSGHSZ11}
Stoller, S.D., Bartocci, E., Seyster, J., Grosu, R., Havelund, K., Smolka,
  S.A., Zadok, E.: Runtime verification with state estimation. In: {RV}.
  Lecture Notes in Computer Science, vol.~7186, pp. 193--207. Springer (2011)

\bibitem{DBLP:conf/rv/WangASL11}
Wang, S., Ayoub, A., Sokolsky, O., Lee, I.: Runtime verification of traces
  under recording uncertainty. In: {RV}. pp. 442--456. LNCS, Springer (2011)

\end{thebibliography}


\newpage
\appendix

\raggedbottom

\section{TeSSLa Evaluation Example} \label{app:semantics-example}

Consider the equation
\[ y = \KW{merge}\big(\KW{lift}\big({} + 1\big)\big(\KW{last}(y, x)\big), 0\big) \]
where $0 = \DEF{zero} = \DEF{const}(0, \DEF{unit}) = \DEF{const}(0, \unitsym\infty) = 0\infty$.

The equation can be seen as function of $y$ with fixed input stream $x = 2\unitsym4\unitsym\infty$. To compute the least fixed-point of this function we start with the empty stream $y_0 = 0$. We than have
\begin{align*}
  \KW{last}(y, x) &= 0\bot \\
  \KW{lift}\big({} + 1\big)\big(\KW{last}(y, x)\big) &= 0\bot \\
  \KW{merge}\big(\KW{lift}\big({} + 1\big)\big(\KW{last}(y, x)\big), 0\big) &= 00
\end{align*}
For the next iteration we start with $y_1 = 00$:
\begin{align*}
  \KW{last}(y, x) &= 2 \\
  \KW{lift}\big({} + 1\big)\big(\KW{last}(y, x)\big) &= 2 \\
  \KW{merge}\big(\KW{lift}\big({} + 1\big)\big(\KW{last}(y, x)\big), 0\big) &= 002
\end{align*}
For the next iteration we start with $y_2 = 002$:
\begin{align*}
  \KW{last}(y, x) &= 20 \\
  \KW{lift}\big({} + 1\big)\big(\KW{last}(y, x)\big) &= 21 \\
  \KW{merge}\big(\KW{lift}\big({} + 1\big)\big(\KW{last}(y, x)\big), 0\big) &= 0021
\end{align*}
For the next iteration we start with $y_3 = 0021$:
\begin{align*}
  \KW{last}(y, x) &= 204 \\
  \KW{lift}\big({} + 1\big)\big(\KW{last}(y, x)\big) &= 214 \\
  \KW{merge}\big(\KW{lift}\big({} + 1\big)\big(\KW{last}(y, x)\big), 0\big) &= 00214
\end{align*}
For the next iteration we start with $y_4 = 00214$:
\begin{align*}
  \KW{last}(y, x) &= 2041\infty \\
  \KW{lift}\big({} + 1\big)\big(\KW{last}(y, x)\big) &= 2142\infty \\
  \KW{merge}\big(\KW{lift}\big({} + 1\big)\big(\KW{last}(y, x)\big), 0\big) &= 002142\infty
\end{align*}
For the next iteration we start with $y_5 = 002142\infty$:
\begin{align*}
  \KW{last}(y, x) &= 2041\infty \\
  \KW{lift}\big({} + 1\big)\big(\KW{last}(y, x)\big) &= 2142\infty \\
  \KW{merge}\big(\KW{lift}\big({} + 1\big)\big(\KW{last}(y, x)\big), 0\big) &= 002142\infty
\end{align*}
So we have reached a fixed-point.

Note that $y_0 \sqsubseteq y_1 \sqsubseteq y_2 \sqsubseteq y_3 \sqsubseteq y_4 \sqsubseteq y_5$ regarding the prefix relation.

\section{Delay} \label{app:delay}

In Section~\ref{sec:tessla} TeSSLa was introduced with the operators $\KW{nil}$, $\KW{unit}$,
$\KW{time}$, $\KW{lift}$ and $\KW{last}$. As discussed in \cite{tessla} this set of operators
is sufficient to express timestamp conservative functions, i.e., functions which do not introduce
additional timestamps. In order to express arbitrary functions which can add events at arbitrary
timestamps, TeSSLa was introduced in \cite{tessla} with the \KW{delay} operator.
The \KW{delay} operator was removed
from the main content of this paper due to space limitations and the amount of technical details needed
to define the concrete and abstract operators. Since these two operators nevertheless
fit very seamless into the abstraction framework presented in this paper they are included
in the implementation and the empirical evaluation and their definitions are given in this
appendix.

\subsection{Concrete Delay}

\KWbullet
$\KW{delay}: 𝓢_{𝕋∖｛0｝}×𝓢_𝔻 → 𝓢_𝕌, \KW{delay}(d, r) := z$
takes a delay stream $d$ and a reset stream $r$.
It emits a unit event in the resulting stream after the delay passes.
Every event on the reset stream resets any delay.
New delays can only be set together with a reset event or an emitted output event.
Formally,
\[
z(t) =
\begin{cases}
  \unitsym & \exists_{t'<t} d(t') = t - t' \wedge \AUX{setable}(z,r,t') \wedge \AUX{noreset}(r,t', t) \\
    ⊥ & \AUX{defined}(z,t) \wedge \forall_{t'<t} d(t') ≠ t - t' \wedge d(t') ≠ \mathrm{?} \\
    ⊥ & \forall_{t'<t}\AUX{unsetable}(z,r,t') \vee \AUX{reset}(r,t', t)\\
    ? & \text{otherwise}
\end{cases}
\]
with the following auxiliary functions:
\begin{align*}
  \AUX{setable}(z,r,t') &\DefinedAs z(t')=\unitsym ∨ t'\in\Ticks(r), \\
  \AUX{reset}(r,t,t') &\DefinedAs ∃_{t'' | t < t'' < t'} t''\in\Ticks(r), \\
  \AUX{unsetable}(z,r,t') &\DefinedAs z(t')=\bot ∧  r(t') = \bot, \text{ and} \\
  \AUX{noreset}(r,t,t') &\DefinedAs ∀_{t'' | t < t'' < t'} r(t'')=\bot.
\end{align*}

For an example on how to use the \KW{delay} operator in a TeSSLa specification see Section~\ref{sec-self-updating-queue}.

\subsection{Abstract Delay}

In Section~\ref{sec:abstraction} of this paper we presented the abstract operators \KWabs{lift} and \KWabs{last} for
their concrete counterparts \KW{lift} and \KW{last}. In a similar way we now define the abstract \KWabs{delay} operator
as counterpart of the concrete \KW{delay}.

\KWabsbullet
$\DELAYabs: \mathcal{P}_{\mathbb{T}_\infty \setminus \{0\}}
  \times \mathcal{P}_\mathbb{D} \rightarrow \mathcal{P}_\mathbb{U}, \DELAYabs(d, r) := z$
needs additional logic to handle $\top$ events on the delay stream and gaps on the input streams. Since the output stream is of type $\mathcal P_{\mathbb U}$ no additional logic to produce $\top$ events is needed. We introduce the following additional auxiliary functions:
\begin{compactenum}[(1)]
\item $\AUX{maybeEvent}$, which captures whether there may be an
  event that would cause the delay to hold:
\[
\begin{array}{rl}
  \AUX{maybeSetable}(d,r,t) \DefinedAs & z(t) \in \{\boxempty,\gap\} \lor r(t) \in \mathbb{D}^\# \cup \{\gap\}  \\
  \AUX{maybeNoReset}(r,t, t') \DefinedAs & \forall_{t'' | t < t'' < t'} t'' \notin \Ticks(r) \\
  \AUX{maybeEvent}(d,r,t) \DefinedAs & \exists_{t' < t}d(t') \in \{t-t', \top, \gap\} \land \AUX{maybeSetable}(d,r,t') \land {}\\
                          & \AUX{maybeNoReset}(r,t',t)
\end{array}
\]    
\item $\AUX{maybeBot}$, which captures whether it is plausible that
  the stream $z$ being defined does not tick:
\[
\begin{array}{rl}
  \AUX{maybeUnsetable}(d,r,t) \DefinedAs& t \notin \Ticks(z) \land t \notin \Ticks(r)\\
  \AUX{maybeReset}(r,t, t') \DefinedAs& \exists_{t'' | t < t'' < t'} r(t'') \in \mathbb{D}^\#\cup\{\gap\}\\
  \AUX{maybeBot}(d,r,t) \DefinedAs & \AUX{defined}(z,t) \land \forall_{t' < t}d(t') \not= \mathrm{?} \land d(t') \not= t-t' \lor {} \\
  &\AUX{maybeUnsetable}(d,r,t') \lor \AUX{maybeReset}(r,t',t)
\end{array}
\]
\end{compactenum}
The conjunction of these two predicates captures the gap case. Again, the parts similar to the concrete operator are typeset in gray:
\[
z(t) = \begin{cases}
    \unitsym & \color{gray} \exists_{t' < t} d(t') = t - t' \land \AUX{setable}(d,r,t') \land \AUX{noreset}(r,t', t) \\
    ⊥ & \color{gray} \AUX{defined}(z,t) \land \forall_{t'<t} d(t') ≠ t - t' \land \color{black} d(t')\not\in\{\textrm{?},\gap\} \\
    ⊥ & \color{gray} \forall_{t'<t} \AUX{unsetable}(z,r,t') \lor \AUX{reset}(r,t', t)\\
    \gap & \AUX{maybeEvent}(d,r,t) \land \AUX{maybeBot}(d,r,t) \\
    \mathrm{?} & \text{otherwise}
  \end{cases}
\]

\begin{wrapfigure}[5]{r}{70mm}
\vskip-8mm
\raggedleft
\begin{streampicture}[gap/.append style={fill=regioncolor}, xscale=0.75, y=8mm]
  \begin{stream}{d}
    \definedto{13}
    \undefinedto{16}
    \definedto{17}
    \delayeventat{1}{2}
    \gapat{2}
    \delayeventat{4}{2}
    \gapat{5}
    \delayeventat{7}{1}
    \delayeventat{9}{2}
    \topat{12}
  \end{stream}
  \begin{stream}{r}
    \definedto{17}
    \eventat{4}
    \gapat{7}
    \eventat{9}
    \gapat{10}
    \eventat{12}
    \eventat{13}
    \gapat{14}
    \eventat{15}
    \eventat{16}        
  \end{stream}
  \begin{stream}{z}
    \definedto{2}
    \definedto{12}
    \undefinedto{13}
    \definedto{14}
    \undefinedto{16}
    \definedto{17}
    \gapat{16}
    \gapat{13}
    \eventat{6}
    \gapat{8}
    \gapat{11}
  \end{stream}
  \streamregion{1}{3}\label{reg:gap}
  \streamregion{4}{6}\label{reg:topgap}
  \streamregion{7}{8}\label{reg:reset}
  \streamregion{9}{11}\label{reg:top}
  \streamregion{12}{13}\label{reg:nogap1}
  \streamregion{14}{16}\label{reg:nogap2}
\end{streampicture}
\end{wrapfigure}

The trace diagram on the right shows an example covering most of the interesting edge cases of the abstract delay:
\ref{reg:gap}, \ref{reg:topgap} the gap on $d$ has no effect;
\ref{reg:reset} the gap on $r$ makes it unclear if a delay starts, a gap is created on $z$;
\ref{reg:top} the gap on $r$ makes it unclear if reset happens, a gap is created on $z$.
\ref{reg:nogap1} $\top$ causes unknown delay which results in a gap on $z$ until next event happens on $r$;
\ref{reg:nogap2} the gap on $r$ starts unknown delay because of gap on $d$, reset event on $r$ does not help because gap on $d$ remains. Results in long gap on $z$ until event on $r$ happens out of the gap on $d$.

\subsection{Finite State Implementations of Delay}

\begin{wrapfigure}[3]{r}{35mm}
\vskip-8mm\raggedleft
\begin{streampicture}[xscale=.59,y=7mm]
  \begin{stream}{d}
    \definedto{10}
    \delayeventat{1}{2}
    \delayeventat{4}{3}
    \delayeventat{5}{3}
    \delayeventat{6}{3}
  \end{stream}
  \begin{stream}{r}
    \definedto{2}
    \eventat{1}
    \undefinedto{8}
    \definedto{10}      
  \end{stream}
  \begin{stream}{z}
    \definedto{10}
    \draw[red] (7,-6) -- +(2,0);
    \gapat{3}
    \gapat[draw=red]{7}
    \gapat[draw=red]{8}
    \gapat[draw=red]{9}
  \end{stream}
\end{streampicture}
\end{wrapfigure}

Consider the example for $z = \DELAYabs(d,r)$ shown on the right.
If the delay produces an event after or during a gap on the reset
stream, a potential reset event might cancel the delay earlier,
creating point-wise gaps.
The concrete delay has only one active delay at any given point in time.
However, to implement the abstract delay infinite memory may be needed
to keep track of all potential delays, because it may not be known
whether a delay has been canceled or started earlier by a reset event.
By keeping track only of the minimal and maximal potential delays we
get a finite-memory implementation. 
However, this implementation may be an imperfect abstraction of the
delay operator which produces larger gaps.
Those gaps marked in red on the right are replaced with one large gap.

\begin{definition}[Finite Memory Abstract Delay]
  We define
  $\DELAYabs_{\text{fin}}(d,r) = z$ to be the same as $\DELAYabs(d,r)$
  except that $z(t) := \gap$ (instead of $\bot$) if
  $\exists_{t' < t} t' + d(t') \leq t \land r(t') = \gap$ and $\forall_{t''|t' + d(t') \leq t'' < t } z(t'') = \gap$ and $\nexists_{t''|t' < t'' < t} t'' \in \Ticks(r)$ and $\exists_{t' < t} t' + d(t') \geq t \land r(t') = \gap$.
\end{definition}

\subsection{Unrolling Delay}

In Section~\ref{sec:well-formedness} we discussed that using the abstract
TeSSLa operators might introduce additional cyclic dependencies between
the stream variables. In order to overcome this issue we presented
translation in Definition~\ref{def:unrolling} which reintroduces guards
in the cyclic dependency by unrolling the value and gap calculation
sequentially. The same approach can be applied to unroll
the abstract \KWabs{delay} operator:

\begin{definition}[Unrolled Abstract Delay]
  \label{def:unrolled-delay}
    We define two variants of the abstract delay, $\DELAYbot$ and $\DELAYgap$ as follows:
    Let $z = \DELAYabs(d,r)$, then $\DELAYbot(d,r) := z_\bot$ and
    $\DELAYgap(d,r,p) := z_{\gap}$.
    \[z_\bot(t) = \begin{cases}
    z(t) & \text{if } z(t) \not= \gap \\
    \bot & \text{otherwise}
    \end{cases}
    \qquad
    z_{\gap}(t) = \begin{cases}
    p(t) & \text{if } t \in \Ticks(p) \\
    \gap & \text{if } t \notin \Ticks(p) \land z(t) = \gap \\
    \bot & \text{otherwise}
    \end{cases}\]
\end{definition}

Since the trigger input of a delay operator cannot be recursive in a
well-formed specification, a recursive equation using one last has
the form $x = \DELAYabs(d,r)$ and $d = f(x,\vec c)$,
where $\vec c$ is a vector of streams not involved in the
recursion and $f$ does not introduce further lasts or delays. 
Now, this equation system can be rewritten in the following equivalent form:
\[
x' = \DELAYbot(d,r)
\qquad
d' = f(x',\vec c)
\qquad
x = \DELAYgap(d',r,x')
\qquad
d = f(x,\vec c)
\]
By doing this, the calculation of the value and the domain of the
definition is split in two parts preserving the semantics. The same pattern can be repeated if multiple recursive abstract lasts or delays
are used.

There is a remaining case to be considered, when $f$ contains
other abstract lasts or delays which lead to mutual recursive specifications.
In this case, these other abstract lasts and delays have to use $x$,
$d$ or $r$ as their input instead of the primed counterparts because
the computation requires the complete last or delay stream, respectively.
Furthermore, there may be more unrolling steps needed than just one to unroll the whole specification but every unrolling step still follows the previously described pattern for last and delay.

\section{Missing Proofs} \label{app:proof}

\setcounter{theorem}{1}
\begin{theorem}
	Every abstract TeSSLa operator is a perfect abstraction of its concrete counterpart.
\end{theorem}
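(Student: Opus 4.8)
The plan is to prove the statement operator by operator. Perfection is \emph{not} preserved under composition (this is exactly the subject of Section~\ref{sec:compositional}), and the claim concerns each basic operator in isolation, so there is no shortcut via compositionality. For a fixed operator $o$ with abstract counterpart $o^\#$ and Galois Connection $(\alpha,\gamma)$ between $2^{\mathcal S_\mathbb{D}}$ and $\mathcal P_\mathbb{D}$, the goal is to establish $o^\#(s) = \alpha(o(\gamma(s)))$ for every abstract input $s$, where $o(\gamma(s))$ denotes the set of concrete results obtained by applying $o$ to every concrete tuple in $\gamma(s)$.

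First I would split the equality into its two directions and dispatch one of them for free. Theorem~\ref{thm:abstract-tessla} gives soundness, $o(\gamma(s)) \subseteq \gamma(o^\#(s))$; applying the monotone $\alpha$ together with $\alpha(\gamma(b)) \preceq b$ from the Galois Connection yields $\alpha(o(\gamma(s))) \preceq o^\#(s)$. Indeed, $\alpha(o(\gamma(s)))$ is by construction the most precise sound abstraction of the concrete result set, so it automatically refines $o^\#(s)$. The entire burden is therefore the reverse, \emph{precision} direction $o^\#(s) \preceq \alpha(o(\gamma(s)))$: the abstract operator must not discard information that the best abstraction keeps.

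For the precision direction I would compute $\alpha(o(\gamma(s)))$ pointwise and match it against the defining clauses of $o^\#$. The enabling observation is that $\gamma((s,\Delta))$ permits independent choices of the concrete value at every event timestamp and arbitrary event patterns inside gaps, so the supremum defining $\alpha$ decomposes locally: at a timestamp $t$, $\alpha(o(\gamma(s)))(t)$ is $\bot$ when every concrete result omits an event there, the data-join of the realized values when all agree on carrying an event, and a point-wise gap exactly when the concrete results disagree on the presence or kind of event. The constant operators $\NILabs$ and $\UNITabs$ are immediate since $\gamma$ is a singleton, and $\TIMEabs$ is a lossless relabeling of values by timestamps with gaps preserved, so perfection is direct. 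For $\LIFTabs$ the pointwise decomposition reduces the claim to perfection of the underlying data function: because lift acts at each timestamp independently and $f^\#$ is \emph{assumed} a perfect abstraction of $f$, $\alpha$ and $f^\#$ commute at every $t$, giving equality.

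The main obstacle is $\LASTabs$ (and, in the appendix, $\DELAYabs$), whose temporal dependencies break the clean pointwise picture, since the output at $t$ depends on the last event before $t$. Here I would treat each clause of the definition that produces an imprecise output and exhibit explicit concrete witnesses in $\gamma(s)$ forcing exactly that imprecision into $\alpha(o(\gamma(s)))$: for a clause emitting $\top$ (case~(1)), two concretizations whose last events before $t$ carry different values, so no sharper data value is sound; for a clause emitting $\gap$ (cases~(2) and~(3)), two concretizations disagreeing on whether an event occurs at $t$, forcing a gap in the best abstraction as well; and for the $\bot$ and exact-value clauses, the observation that all concretizations agree. The delicate bookkeeping that carries the proof is verifying that these witnesses genuinely lie in $\gamma(s)$ — respecting the known timestamps $\Ticks(s)$, the value constraints $s(t) \in \gamma(s'(t))$, and the gap regions of $\Delta$ — and that they span the full range claimed by the join; each intricate side condition ($\AUX{isLast}$ terminating before versus across a gap, $\AUX{defined}$, and the initial-gap clause~(3)) must be accounted for separately.
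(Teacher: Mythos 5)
Your proposal is correct and follows essentially the same route as the paper's proof: an operator-by-operator verification of $o^\#(s) = \alpha(o(\gamma(s)))$, with $\NILabs$, $\UNITabs$, $\TIMEabs$ immediate, $\LIFTabs$ reduced to the assumed perfection of $f^\#$, and $\LASTabs$/$\DELAYabs$ handled by checking that the clauses emitting $\top$ and $\gap$ coincide with what concretization followed by abstraction forces (the paper argues this by describing the concretized stream sets rather than naming explicit witnesses, but the content is the same). Your extra observation that the inclusion $\alpha(o(\gamma(s))) \preceq o^\#(s)$ already follows from Theorem~\ref{thm:abstract-tessla} and the Galois Connection is a clean structural refinement the paper leaves implicit, and it correctly isolates the precision direction as the only real burden.
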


\begin{proof}
	We need to show that for all partial streams $s_1,\dots,s_n$ with $s_i \in {\mathcal{P}_{\mathbb{D}_i}}_i$ and every TeSSLa operator $f$ and its abstract counterpart $f^\#$ it holds that $f^\#(s_1,\dots,s_n) = \alpha(f(\gamma(s_1),\dots,\gamma(s_n)))$. We use the Galois Connection $(\alpha,\gamma)$ as presented in Definition \ref{def:pes}. For $\NIL$ and $\UNIT$ this holds trivially because they are equal to their abstract counterparts. So it remains to show this for $\TIME$, $\LIFT$, $\LAST$ and $\DELAY$. Note that because the Galois Connection maps from the abstract lattice to the powerset of the (concrete) lattice, we assume all operators in the following are naturally extended to sets.
	
	$\TIMEabs(s_1) = \alpha(\TIME(\gamma(s_1)))$ holds because $\TIMEabs$ changes nothing regarding the abstract elements: gaps are copied and the values of the events do not matter because they are replaced with the timestamps.
	
	$\LIFTabs(g^\#)(s_1,\dots,s_n) = \alpha(\LIFT(g)(\gamma(s_1),\dots,\gamma(s_n)))$ holds if $g^\#$ is a perfect abstraction of $g$, because the values and gaps are only passed to $g^\#$ in the abstract case which calculates the values and gaps.
	
	$\LASTabs(s_1,s_2) = \alpha(\LAST(\gamma(s_1),\gamma(s_2)))$ holds because 
	\begin{itemize}
		\item the only special case regarding the data abstraction is when a trigger occurs and there is a gap on the stream of values after the last event. Then $\top$ is the output which is fine because in the concretization the gap is replaced with arbitrary values. Other then that, the data abstraction is not important because in $\LASTabs$ the data values are only copied to the new events on the output stream.
		\item the only special case regarding the gaps is a trigger event before any event on the stream of values but after a gap on said stream, which results in a gap. This is fine because the gap on the stream of values is later replaced with an arbitrary or no value which is then the output when the trigger event occurs. Thus using $\alpha$ later on this set of streams, we get a gap where the trigger event was. After the first event on the stream of values, the gaps from the stream of trigger events are just copied.
	\end{itemize}
	
	$\DELAYabs(s_1,s_2) = \alpha(\DELAY(\gamma(s_1),\gamma(s_2)))$ holds because data abstraction wise, the output can always only be unit. If we look at the gaps, there are only a few cases where $\DELAYabs$ produces gaps: Point-gaps are created when a delay times out which (1) is set via a gap or (2) is set normally but a gap occurs on the reset stream before the timeout. Big gaps which last until the next event on the reset stream are created when a delay can be set but on the delay stream is (3) an event with $\top$ as value or (4) a gap. This perfectly fits to how $\alpha(\DELAY(\gamma(d),\gamma(r)))$ works because in
	\begin{itemize}
		\item[(1)] the concretization function creates sets of streams where the delay is set not at all or with a value. The $\DELAY$ then outputs a set of streams where sometimes there is an output event and sometimes not which results in a point-gap again when the abstraction function is used.
		\item[(2)] the concretization function creates sets of streams where where a reset and no reset happens before the timeout. This has the same implications as for (1).
		\item[(3)] there is a delay started with $\top$ as value. Then the concretization function creates a set of streams having all possible values as delay at that point. Hence, $\DELAY$ results in the set of streams that have an arbitrary amount of events between the point where the delay is set and the next reset event. Using the abstraction function again this results in a big gap.
		\item[(4)] it is the same case as (3) but the set of streams created by the concretization function contains additionally those streams with no value as delay. This has the same implications as in (3).
	\end{itemize}
	\qed
\end{proof}

\setcounter{theorem}{4}
\begin{lemma}
	$\LASTTIMEabs$ is a perfect abstraction of $\LASTTIME$.
\end{lemma}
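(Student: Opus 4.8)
The plan is to establish the identity $\LASTTIMEabs(v,r) = \alpha(\LASTTIME(\gamma(v),\gamma(r)))$ required by the definition of a perfect abstraction, arguing pointwise in $t$ and splitting according to the value of the auxiliary stream $y := \LASTabs(\TIMEabs(v),r)$ on which $\LASTTIMEabs$ is built. First I would record that $\LASTTIME = \LAST\circ\TIME$ on concrete streams, and that $y$, being a composition of the abstract operators $\TIMEabs$ and $\LASTabs$, is by Theorem~\ref{theorem:abstraction} a \emph{sound} abstraction of $\LASTTIME$. The entire content of the lemma is that the one lossy output value $\top$ of $y$ can be replaced by the interval $[a,b]$ to restore \emph{perfection}; accordingly the proof separates the cases $y(t)\neq\top$ from the single case $y(t)=\top$, where all operators are understood as extended to sets of concrete streams.

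For the cases $y(t)\in\{\text{event }d,\ \bot,\ \gap,\ \mathrm{?}\}$ I would show that the composition $\LASTabs\circ\TIMEabs$ is already perfect, so that $\LASTTIMEabs(v,r)(t)=y(t)=\alpha(\LASTTIME(\gamma(v),\gamma(r)))(t)$. The point is that $\TIMEabs$ and $\TIME$ only relabel event values and leave the event/gap skeleton untouched, while $\LASTabs$ is a perfect abstraction of $\LAST$ by Theorem~\ref{theorem:abstraction}. Thus when $y(t)$ is a concrete timestamp $d$, the relevant last event on $v$ is an unambiguous known event occurring at time $d$ with no intervening gap, so every concretization $(v',r')$ yields last-timestamp exactly $d$ and $\alpha$ returns the singleton $d$; when $y(t)=\bot$ or $\mathrm{?}$ no concretization produces an event at $t$ (resp.\ $t$ is beyond the progress); and when $y(t)=\gap$ the concretizations genuinely disagree about the existence of an output event (either because $r(t)=\gap$ or because the only prior material on $v$ is an initial gap), so $\alpha$ reproduces the same $\gap$. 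None of these skeleton decisions is affected by the $\TIMEabs$ relabeling.

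The heart of the argument is the case $y(t)=\top$. Here $t\in\Ticks(r)$, there is a last known event on $v$ at time $b=\max\{t'<t\mid t'\in\Ticks(v)\}$, and this event is followed by a gap before $t$ whose supremum is $a=\op{inf}\{t'<t\mid \forall_{t'<t''<t}v(t'')\neq\gap\}$. Since $t\in\Ticks(r)$ and a prior event on $v$ exists, \emph{every} concretization $(v',r')$ fires at $t$ and has at least one event before $t$, so $\LASTTIME(v',r')(t)$ is always a genuine event and $\alpha$ introduces no gap. I would then enumerate the attainable last-event timestamps as $(v',r')$ range over the concretizations: the fixed event at $b$ is present in every $v'$ and is realized as the output whenever the gap hides no event, whereas an event placed inside the gap realizes any timestamp up to $a$. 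Hence the attainable output values are the point $b$ together with the timestamps available inside the gap, and their convex hull is precisely the interval with endpoints $a$ and $b$; applying $\alpha$ in the interval abstraction of $\mathbb T$ (the reading of $\mathcal P_{2^{\mathbb T}}$ forced by the interval-arithmetic requirement discussed above) returns exactly this interval, matching $\LASTTIMEabs(v,r)(t)=[a,b]$.

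The main obstacle I expect is this last step: making the enumeration of hidden-event timestamps rigorous, in particular handling inclusive versus exclusive gap boundaries so that the endpoints $a$ (the supremum of the gap) and $b$ (the last known event) are attained correctly, and confirming that the $\bot$-region between $b$ and the start of the gap contributes \emph{no} attainable values. This last point is exactly why only the convex hull, and not the exact reachable set, need be returned, and therefore why the interval abstraction rather than the exact powerset abstraction is the correct target. Once the attainable value set and its hull are pinned down, perfection in the $\top$-case follows at once, and the remaining cases reduce to Theorem~\ref{theorem:abstraction} applied to $\LASTabs$ and $\TIMEabs$ separately.
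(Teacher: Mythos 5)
Your proposal is correct and follows essentially the same route as the paper's proof: reduce to the single lossy case $y(t)=\top$ via the perfection of $\LASTabs$ and $\TIMEabs$ (Theorem~\ref{theorem:abstraction}), then check that the interval $[a,b]$ coincides with what $\alpha\circ\LASTTIME\circ\gamma$ yields by enumerating the attainable last-event timestamps (the fixed event at $b$ or any event hidden in the gap). You are in fact somewhat more explicit than the paper about the convex-hull subtlety — that the $\bot$-region between the last event and the gap contributes no attainable timestamps, so perfection must be read relative to the interval abstraction of $\mathbb T$ — which the paper's proof leaves implicit.
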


\begin{proof}
	We need to show that for all partial streams $v,r$ the following equation holds: $\LASTTIMEabs(v,r) = \alpha(\LASTTIME(\gamma(v),\gamma(r)))$. We use the Galois Connection $(\alpha,\gamma)$ as presented in Definition \ref{def:pes}. Note that because the Galois Connection maps from the abstract lattice to the powerset of the (concrete) lattice, we assume all operators in the following are naturally extended to sets.
	
	Because as shown in Theorem \ref{theorem:abstraction} $\LASTabs$ and $\TIMEabs$ are perfects abstractions of $\LAST$ and $\TIME$, respectively, to show that $\LASTTIMEabs$ is a perfect abstraction of $\LASTTIME$, we just need to show that for the cases where the composition of $\LASTabs$ and $\TIMEabs$ is not a perfect abstraction of $\LASTTIME$, $\LASTTIMEabs$ is now perfect. More precisely, the only case which is not perfect in the composition is the $\top$ case, which means a gap occurred after the last event on the stream of values. Then $\LASTTIMEabs$ returns an interval from the timestamp of the last event on $v$ to the timestamp at the end of the gap. This is the same which would be done when the concretization function first creates a set of streams containing an arbitrary amount of events in the gap which results in different timestamps as output. Either the latest from an event in the gap, if one exists or from the last event on $v$. The abstraction function would then creat said interval. In every other case, $\LASTTIMEabs$ outputs the same value as the composition of the perfect operators does.
	\qed
\end{proof}

\section{Data Abstraction of a Queue for Sliding Window Average Computation}
\label{sec:appqueue}

\subsection{Concrete Queue}

TeSSLa can handle complex data structures and corresponding functions.
An ordered queue is a sequence of timestamps and values
$\mathcal{Q}_\mathbb{D} = (\mathbb{T} \times \mathbb{D})^*$
such that
\[\forall \langle t_0, d_0, t_1, d_1, \ldots, t_n, d_n \rangle \in \mathcal{Q}_\mathbb{D}: t_{i-1} < t_i \text{ where } 1 \leq i \leq n\]
We use the following functions on queues:
\begin{align*}
\AUX{enq}&: \mathbb{T} \times \mathbb{D} \times \mathcal{Q}_\mathbb{D} \rightarrow \mathcal{Q}_\mathbb{D}\\
\AUX{enq}(t, d, q) &= q \langle t,d \rangle \\
\AUX{remOlder}_k&: \mathbb{T} \times \mathcal{Q}_\mathbb{D} \rightarrow \mathcal{Q}_\mathbb{D}\\
\AUX{remOlder}_k(t, \langle t_1,d,t_2 \rangle q) &= \begin{cases}
\AUX{remOlder}_k(t, \langle t_2 \rangle q) & \text{if } t_2 \leq t-k\\
\langle \AUX{max}(t-k,t_1), d, t_2\rangle q & \text{else}\\	
\end{cases}\\
\AUX{remOlder}_k(t, \langle t_1,d \rangle) &= \begin{cases}
\langle t-k,d\rangle & \text{if } t_1 < t-k\\
\langle t_1,d \rangle & \text{else} 
\end{cases}\\
\AUX{remOlder}_k(t, \langle\rangle) &= \langle \rangle\\
\AUX{fold}&: ((\mathbb{T}^2 \times \mathbb{D}) \times \mathbb{D} \rightarrow \mathbb{D}) \times \mathcal{Q}_\mathbb{D} \times \mathbb{D} \times \mathbb{T} \rightarrow \mathbb{D}\\
\AUX{fold}(f, \langle t_1,d,t_2\rangle q, acc, until) &= \AUX{fold}(f,\langle t_2\rangle q, f(t_1,t_2,d,acc), until)\\
\AUX{fold}(f, \langle t_1,d \rangle, acc, until) &= f(t_1,until,d,acc)\\
\AUX{fold}(f, \langle \rangle, acc, until) &= acc\\
\end{align*}

\subsection{Abstract Queue}

The set of our abstract version of queues we use here is defined as $\mathcal{Q}^\#_{\mathbb{D}^\#} = \mathbb{T}_\infty \times \mathcal{Q}_{\mathbb{D}^\#}$. The concretisation function is given as follows
\begin{align*}
\gamma((t, q)) = & \{\langle(t_0,d_0)\dots(t_n,d_n)(t_{n+1},d_{n+1})\dots(t_m,d_m)\rangle \mid \\ 
& \langle(t_0,d_0)\dots(t_n,d_n)\rangle \in \mathcal{Q}_{\mathbb{D}^\#} \land t_n < t \land \\
& \langle(t_{n+1},d_{n+1})\dots(t_m,d_m)\rangle \subseteq q \land t_{n+1} \geq t\}
\end{align*}
and the abstraction functions is given as $\alpha(Q) = \mathsf{sup}\{(0, q) \mid q \in Q\}$.

To improve the accuracy of the abstraction we insert the following limiting function as annotations on the possible ranges of variables into the concrete specification:
\begin{align*}
\AUX{limit}&: \mathbb{D} \times \mathbb{D} \times \mathbb{D} \rightarrow \mathbb{D}\\
\AUX{limit}(a,b,d) &= \begin{cases}
a & \text{if } d < a \\
b & \text{if } d > b \\
d & \text{else}
\end{cases}
\end{align*}
In a similar fashion we add the a function removing newer elements from the queue which serves as annotation that a queue will never contain elements with a future timestamp:
\begin{align*}
\AUX{remNewer}&: \mathbb{T} \times \mathcal{Q}_\mathbb{D} \rightarrow \mathcal{Q}_\mathbb{D}\\
\AUX{remNewer}(t, q \langle t_1,d \rangle) &= \begin{cases}
\AUX{remNewer}(t, q) & \text{if } t_1 > t\\
q \langle t_1, d\rangle & \text{else}\\ 
\end{cases}\\
\AUX{remNewer}(t, \langle\rangle) &= \langle\rangle
\end{align*}
Now we define abstract versions for the previously defined functions:
\begin{align*}
\AUX{remOlder}_k^\#&: \mathbb{T} \times \mathcal{Q}^\#_{\mathbb{I}^\#_\mathbb{D}} \rightarrow \mathcal{Q}^\#_{\mathbb{I}^\#_\mathbb{D}}\\
\AUX{remOlder}_k^\#(t, (u, q)) &= \begin{cases}
(u, \AUX{remOlder_k(t,q)}) & \text{if } t - k < u\\
(0, \AUX{remOlder_k(t,q)}) & \text{else}\\	
\end{cases}\\
\AUX{enq}^\#&: \mathbb{T} \times \mathbb{I}^\#_\mathbb{D} \times \mathcal{Q}^\#_{\mathbb{I}^\#_\mathbb{D}} \rightarrow \mathcal{Q}^\#_{\mathbb{I}^\#_\mathbb{D}}\\
\AUX{enq}^\#(t, d, (u,q)) &= (u, \AUX{enq}(t, d, q)) \\
\AUX{fold}^\#&: ((\mathbb{T}^2 \times \mathbb{I}^\#_\mathbb{D}) \times \mathbb{I}^\#_\mathbb{D} \rightarrow \mathbb{I}^\#_\mathbb{D}) \times \mathcal{Q}^\#_{\mathbb{I}^\#_\mathbb{D}} \times \mathbb{I}^\#_\mathbb{D} \rightarrow \mathbb{I}_\mathbb{D}\\
\AUX{fold}^\#(f, (u, q), acc, until) &= \begin{cases}
\AUX{fold}(f, q, acc, until) & \text{if } u = 0\\
\AUX{fold}(f, q, \top, until) & \text{else}
\end{cases}\\
\AUX{remNewer}^\#&: \mathbb{T} \times \mathcal{Q}^\#_{\mathbb{I}^\#_\mathbb{D}} \rightarrow \mathcal{Q}^\#_{\mathbb{I}^\#_\mathbb{D}}\\
\AUX{remNewer}^\#(t, (u, q)) &= (\AUX{min}(u,t), \AUX{remNewer}(t,q)) \\
\AUX{limit}^\#&: \mathbb{D} \times \mathbb{D} \times \mathbb{I}^\#_\mathbb{D} \rightarrow \mathbb{I}^\#_\mathbb{D}\\
\AUX{limit}^\#(a,b,[d,d']) &= [\AUX{limit}(a,b,d),\AUX{limit}(a,b,d')]
\end{align*}

Note, that the abstract functions use intervals $\mathbb I^\#_{\mathbb D}$ to represent ranges of possible values.

In order to use the data domains we have to name the following elements:
\[ \top = (\infty, \langle\rangle ) \in \mathcal{Q}^\#_{\mathbb{I}^\#_\mathbb{D}}\text{ and }\top = [-\infty,\infty] \in \mathbb{I}^\#_\mathbb{D}. \]
Now we can provide an abstraction of the TeSSLa specification for the input stream $\DEF{load}\in \mathcal{P}_{\mathbb{R}^\#}$:
\begin{align*}
  \DEF{stripped} &= \DEF{slift}^\#(\AUX{rem})(\TIMEabs(\DEF{load}), \DEF{merge}^\#(\LASTabs(\DEF{queue}, \DEF{load}), \langle\rangle)))\\
  \AUX{rem}(t,q) &= \AUX{remOlder}^\#_5(t,\AUX{remNewer}^\#(t,q))\\
  \DEF{queue} &= \LIFTabs(\AUX{enq}^\#)(\TIMEabs(\DEF{load}), \DEF{load}, \DEF{stripped})\\
  \DEF{avg} &= \LIFTabs(\AUX{int})(\DEF{queue}, \TIMEabs(\DEF{load}))\\
  \AUX{int}(q, u) &= \AUX{fold}^\#(f_u, q, 0, u) \\
  f_u(a, b, v, \AUX{acc}) &= \AUX{limit}^\#\left(0, \frac{a - u + 5}5, \AUX{acc}\right) + v \cdot (b-a)/5
\end{align*}

\subsection{Finite Abstract Queue}

If we want to limit the queue to a certain number of events, we just need to change the enqueue function as follows:
\begin{align*}
  \AUX{enq}_c&: \mathbb{T} \times \mathbb{I}^\#_\mathbb{D} \times \mathcal{Q}^\#_{\mathbb{I}^\#_\mathbb{D}} \times \mathbb{N}_{>1} \rightarrow \mathcal{Q}^\#_{\mathbb{I}^\#_\mathbb{D}}\\
  \AUX{enq}_c(t, d, (u, q), n) &= \begin{cases}
    \AUX{enq}^\#(t, d, (u, q)) & \text{if } |q| < n \\
    \AUX{enq}_c(t, d, (t_2, \langle t_2 \rangle q'), n) & \text{otherwise,}
  \end{cases}\\
  & \hspace{5em} \text{with } q = \langle t_1, d_1, t_2 \rangle q'
\end{align*}
For $q \in \mathcal{Q}^\#_{\mathbb{I}^\#_\mathbb{D}}$ we denote with $|q|$ the number of data values in $q$.

\subsection{Concrete Self-Updating Queue}
\label{sec-self-updating-queue}

We now update the specification to be self-updating, i.e., update automatically every time an event leaves the moving window completely. For this, we can use the operator \DELAY to create additional events. We add the following equations to the original specification and use \DEF{updated} instead of \DEF{queue} and \DEF{load} in the computation of \DEF{avg}:
\begin{align*}
\DEF{timeout} &= \LIFT(t, q \mapsto \AUX{dataTimeout}(q) - t + 5)(\TIME(\DEF{updated}), \DEF{updated})\\
\DEF{merged} &= \DEF{merge}(\DEF{queue}, \LAST(\DEF{updated}, \DELAY(\DEF{timeout}, \DEF{load})))\\
\DEF{updated} &= \LIFT(\AUX{remOlder}_5)(\TIME(\DEF{merged}), \DEF{merged})
\end{align*}
where $\AUX{dataTimeout}: \mathcal{Q}_\mathbb{D} \to \mathbb{T}$  is defined as follows:
\begin{align*}
\AUX{dataTimeout}(\langle\rangle) &= \infty\\
\AUX{dataTimeout}(\langle t,d\rangle) &= \infty\\
\AUX{dataTimeout}(\langle t,d,t'\rangle q) &= t'
\end{align*}

\subsection{Abstract Self-Updating Queue}
\label{sec-abstract-self-updating-queue}

The abstract version of the specification for the self-updating queue can again be derived by replacing all TeSSLa operators with their abstract counterparts:
\begin{align*}
\DEF{timeout} &= \LIFTabs(t, q \mapsto \AUX{dataTimeout}^\#(q) - t + 5)(\TIMEabs(\DEF{updated}), \DEF{updated})\\
\DEF{merged} &= \DEF{merge}^\#(\DEF{queue}, \LASTabs(\DEF{updated}, \DELAYabs(\DEF{timeout}, \DEF{load})))\\
\DEF{updated} &= \LIFTabs(\AUX{remOlder}_5)(\TIMEabs(\DEF{merged})), \DEF{merged})
\end{align*}
where $\AUX{dataTimeout}^\#: \mathcal{Q}^\#_{\mathbb{I}^\#_{\mathbb D}} \to \mathbb{T}$  is defined as follows:
\begin{align*}
\AUX{dataTimeout}^\#(u, q) &= \begin{cases}
\AUX{dataTimeout}(q) & \text{if } u = 0 \\
\top & \text{else}
\end{cases}
\end{align*}
The rest of the specification is adjusted to the abstract setting as shown before.
The unrolling of \LASTabs and \DELAYabs as defined in \autoref{def:unrolling} and \autoref{def:unrolled-delay} can be applied in order to realize this abstract TeSSLa specification in TeSSLa.

\section{Technical Details of the Ignorance Measure}
\label{app:ignorance-measure}

The function $v: \mathcal S_\mathbb D \times \mathbb T \to \mathbb D$ provides the current value of a stream at a given timestamp according to the signal semantics:
\begin{align*}
  v_s(t) &:= s(\max\{ t' \in \Ticks(s) \mid t' < t \})
\end{align*}

For the data domain $\mathbb D$ we assume a measure space $(\mathbb D, \mathcal A, \mu)$ with the $\sigma$-algebra $\mathcal A \subseteq 2^{\mathbb D}$ on $2^{\mathbb D}$ and the normalized measure $\mu: \mathcal A \to [0,1]$ on $\mathbb D$, s.t. $\forall D \in \mathcal A: 0 \le \mu(D) \le 1$ and $\mu(\mathbb D) = 1$.

An \emph{ignorance representation} is a piece-wise constant function $\ig: \mathbb T \to \mathcal A$ s.t. $\ig$ can be represented as a finite number of intervals. The progress of a stream $s \in \mathcal S_{\mathbb D}$ is a timestamp $t$ such that $\forall_{t' > t} s(t') = \mathrm{?}$ and $\forall_{t'<t} s(t') \neq \mathrm{?}$. For a set of streams $S \subseteq S_{\mathbb D}$ with equal progress $T$ an ignorance representation $\ig$ can be retrieved such that
\begin{align*}
  & \forall r,s \in S, t \in [0,T]: v_r(t) \neq v_s(t) \Rightarrow v_r(t), v_s(t) \in \ig(t) \text{ and}\\
  & \forall t \in \op{dom}(\ig): \forall x \in \ig(t): \exists r,s \in S: v_r(t) \neq v_s(t) = x.
\end{align*}

Note, that depending on the chosen $\sigma$-algebra $\mathcal A$ one might need to extend the set $S$ into a compliant superset by adding missing streams first, before a matching ignorance representation can be derived. In those cases one has to specify a corresponding extension function together with the $\sigma$-algebra $\mathcal A$ and measure $\mu$. We call the superset derived by applying the extension function to $S$ the \emph{ignorance closure} of $S$. From now on we assume $S$ always to be extended appropriately if needed. Under these assumptions, the tuple $(\mathcal S_\mathbb D, 2^{\mathcal S_\mathbb D})$ of all streams with progress $T$ and their power set form a measurable space.

The \emph{ignorance measure} $\iota: 2^{\hat {\mathcal S}_\mathbb D} \to [0,1]$ is defined via the interval representation of its ignorance representation:
\begin{align*}
  \iota(S) &:= \left.\left(\sum_{[t_1,t_2] \in \op{dom}(\ig)} (t_2 - t_1) \cdot \sum_{I \in \ig(t_1)}\mu(I)\right)\right/T
\end{align*}

For the evaluation performed in this paper we either use limited intervals or finite sets. In case of the limited intervals our data domain $\mathbb D$ is limited such that $\forall_{d\in\mathbb D}$ we have $k \le d \le l$. We use the measure space $(\mathbb D, \mathbb I_{\mathbb D}, \mu)$ using the set of all intervals $\mathbb I_{\mathbb D}$ on $\mathbb D$ such that for any interval $[a,b] \subseteq \mathbb D$ we have
\[ \mu([a,b]) = \frac{b-a}{l-k}.\]
In the case of a finite set $\mathbb D$ we can use the measure space $(\mathbb D, 2^{\mathbb D}, \mu)$ such that for any $D \subseteq \mathbb D$ we have
\[ \mu(D) = \frac{|D|}{|\mathbb D|}. \]

\section{Empirical Results}
\label{sec:empirical-data}

\begin{center}
  \begin{zebratabular}{lrrrrr}
    \headerrow
    example             & $d$ & $d^\#$ & $d^\#/d$ &  $i$ &  $k$ \\
    reset-count         &  4  &     64 &     16.0 & .246 & .246 \\
    reset-sum           &  4  &     85 &     21.3 & .076 & .076 \\
    filter-example      &  9  &    160 &     17.8 & .200 & .200 \\
    variable-period     &  3  &    103 &     34.3 & .175 & .200 \\
    bursts              & 13  &    247 &     19.0 & .152 & .336 \\
    queue               &  7  &    132 &     18.9 & .095 & .226 \\
    finite-queue        &  8  &    132 &     16.5 & .365 & .365 \\
    self-updating-queue & 11  &    183 &     16.6 & .121 & .301
  \end{zebratabular}
\end{center}

The above table shows the results of the empirical evaluation. The computation depth $d$ of the TeSSLa specification, the computation depth $d^\#$ of the corresponding abstract TeSSLa specification, the computation depth overhead $d^\#/d$, the perfect ignorance $i$ and the ignorance of the abstraction $k$.

\end{document}